\theoremstyle{definition}
\newtheorem{theorem}{Theorem}
\newtheorem{lemma}{Lemma}
\newtheorem{corollary}{Corollary}
\newtheorem{example}{Example}
\theoremstyle{remark}
\newtheorem{remark}{Remark}
\def\eC{\mathscr{C}}
\def\cC{\mathcal{C}}
\def\cF{\mathcal{F}}
\def\cJ{\mathcal{J}}
\def\cN{\mathcal{N}}
\def\cO{\mathcal{O}}
\def\cW{\mathcal{W}}
\def\bR{{\mathbb R}}
\def\tC{\widetilde{\mathcal{C}}}
\def\tF{\widetilde{\mathcal{F}}}
\def\tW{\widetilde{\mathcal{W}}}
\begin{document}

\title{Ensuring the boundedness of the core of games with restricted cooperation}
\author{Michel GRABISCH\\
Paris School of Economics, University of Paris I\\
106-112, Bd de l'H\^opital, 75013 Paris, France\\
Tel (+33) 1-44-07-82-85, Fax
    (+33) 1-44-07-83-01\\
    email \texttt{michel.grabisch@univ-paris1.fr}}
\date{}
\maketitle

\begin{abstract}
The core of a cooperative game on a set of players $N$ is one of the most
popular concept of solution. When cooperation is restricted (feasible coalitions
form a subcollection $\cF$ of $2^N$), the core may become unbounded, which makes
it usage questionable in practice. Our proposal is to make the core
bounded by turning some of the inequalities defining the core into equalities
(additional efficiency constraints). We address the following mathematical
problem: can we find a minimal set of inequalities in the core such that, if
turned into equalities, the core becomes bounded? The new core obtained is
called the restricted core. We completely solve the question when $\cF$ is a
distributive lattice, introducing also the notion of restricted Weber set. We
show that the case of regular set systems amounts more or less to the case of
distributive lattices. We also study the case of weakly union-closed systems and
give some results for the general case.
\end{abstract}
{\bf Keywords: } cooperative game, core, restricted cooperation, bounded core,
Weber set

\section{Introduction}
In cooperative models, one of the main issues is to define in a rational way the
sharing of the total worth of a game among the players, what is usually called
the \emph{solution} of the game. The core is perhaps the most popular concept of
solution, because it is built on a very simple rationality criterion: no
coalition should receive less than that it can earn by itself, thus avoiding any
instability in the game (this is often called \emph{coalitional
  rationality}). The core  is a bounded convex polyhedron whenever nonempty, and its
properties have been studied in depth (see, e.g., \cite{sha71,ich81,nura98}).

The classical setting of cooperative games stipulates that any player can
(fully) participate or not participate to the game, and that any coalition can
form. This too simplistic framework has been made more flexible in many
respects, or more tailored to some special kind of application by many authors:
let us cite on the one hand multichoice games \cite{hsra93,peza05}, games with
multiple alternatives \cite{amcama98,bol00} and bicooperative games
\cite{bifejilo06,lagr06a} (participation is gradual, can be positive or
negative, or the player has several options), and on the other hand, games with
restricted cooperation, where only a limited set of coalitions are allowed to
form. A vast literature is devoted to this last category, studying
various possibilities for the algebraic structure of the set of \emph{feasible}
coalitions: games on antimatroids \cite{albibrji04}, convex geometries
\cite{bileji98}, lattices \cite{dere98,fake92,grla05a}, graphs
\cite{subohaquko05,brlava07}, etc.

Our study will concern games with restricted cooperation, and especially the
core of such games. Here also, there is a vast literature we will
not cite here (see a recent survey by the author on this topic \cite{gra09a}). Indeed,
the study of the core in such a general situation becomes much more challenging:
since the core is defined by a system of linear inequalities, it is always a
polyhedron, however it need not be bounded any more, and it may
even have no vertice or it may contain a line. As a matter of fact, since the
core is supposed to represent a set of payoffs for players, boundedness is
perhaps the property one wants to keep in priority (arbitrarily large payoffs
cannot exist in reality). Therefore, a natural
question arises: \emph{How to make to core bounded in any case, keeping the
  spirit of its definition?} By ``spirit of definition'', we mean the essential
idea of coalitional rationality. A very simple answer to this
question was proposed by Grabisch and Xie \cite{grxi05,grxi07}: turn some of the
inequalities into equalities, which can be seen as adding supplementary
\emph{efficiency} constraints, while preserving coalitional rationality. The authors proposed a systematic way of doing
this for games on distributive lattices, according to some interpretation related
to the hierarchy of players. 

We want to take here a more general and mathematical point of
view. Specifically, we address the following question: \emph{Suppose $v$ is a
  game with restricted cooperation, whatever the structure of its set of feasible
coalitions. Can we find a minimal set of inequalities in the core of $v$ such
that, if turned into equalities, the core will be bounded?} A second question
is: what about the Weber set? Can we define it so that the classical property of
inclusion of the core into the Weber set is preserved?

We give a complete answer to these questions for games on distributive lattices,
---thus generalizing and simplifying results of Grabisch and Xie,
and partial answers for other structures and the general case. 

The paper is organized as follows. Section~\ref{sec:prel} introduces the basic
material for the paper: set systems, posets and lattices, etc. We also explain
our main idea to make the core bounded. Section~\ref{sec:cloui} studies the
case of distributive lattices. It gives an optimal algorithm to find which
inequalities must be turned into equalities. Also, it introduces the notion of
restricted Weber set, and shows that the classical result of inclusion of the
core into the Weber set still holds. Section~\ref{sec:gene} studies the general
case. A first result shows that if rays have a certain form, one can treat an
equivalent problem where the set system is a distributive lattice, and therefore
benefit from results of Section~\ref{sec:cloui}. It is shown that regular set
systems fall into this category. However, for weakly union-closed systems, an
additional condition on the set system is required. We give also an algorithm to
find all extremal rays of the core of a game on a regular set system.

We assume some familiarity of the reader with polyhedra.
To avoid a heavy notation, we often omit braces and commas for singletons and sets, writing e.g,
$N\setminus i$ instead of $N\setminus \{i\}$, $123$ instead of $\{1,2,3\}$, etc. 


\section{Preliminaries}\label{sec:prel}
\subsection{Games on set systems}\label{sec:ss}
We consider $N:=\{1,\ldots,n\}$ the set of players, agents, etc. A \emph{set
  system} $\cF$ on $N$ is a collection of subsets of $N$ containing $N$ and
$\emptyset$. One can think of $\cF$ as the collection of \emph{feasible
  coalitions}, and when $\cF\subset 2^N$ it is common to speak of \emph{restricted
  cooperation}. A \emph{game} on $\cF$ is a function $v:\cF\rightarrow \bR$ such
that $v(\emptyset)=0$.

A \emph{payoff vector} $x$ is any vector in $\bR^N$, which defines the
amount of money given to each player. It is common to use the notation $x(S)$ where $S\in
2^N$, as a shorthand for $\sum_{i\in S}x_i$, with the convention
$x(\emptyset):=0$.   
The \emph{core} of a game $v$ on $\cF$ is the set of payoff vectors being
\emph{coalitionally rational}, in the sense that any feasible coalition $S$ gets
at least what it can achieve by itself, namely $v(S)$: 
\[
\cC(v):=\{x\in \bR^N\mid x(S)\geq v(S),\forall S\in\cF,x(N)=v(N)\}.
\]
The equality $x(N)=v(N)$ is known as the \emph{efficiency condition}. It means
that no more than $v(N)$ can be distributed among the players whatsoever, and
distributing strictly less would be inefficient (the definition makes sense only
if the grand coalition $N$ is the best way to make profit).  

By definition, the core is a closed convex polyhedron, however it may be
unbounded (see in \cite{gra09a} a survey of the properties of the core of games
on set systems). We denote by $\cC(0)$ the \emph{recession cone} of $\cC(v)$,
that is, the cone defined by
\[
\cC(0):=\{x\in \bR^N\mid x(S)\geq 0,\forall S\in\cF,x(N)=0\}.
\]
It is well known from the theory of polyhedra that $\cC(v)$ is bounded if and
only if $\cC(0)=\{0\}$, and that the extremal rays of $\cC(0)$ are the extremal rays of $\cC(v)$
for any game $v$. Since in this paper we are mainly interested into the
boundedness issue and therefore in rays, we mainly deal with the recession cone
$\cC(0)$. 

\subsection{Posets and lattices}\label{sec:latt}
A set system $\cF$ can be seen as a partially ordered set (poset) when endowed
with the inclusion order. Properties of $\cC(v)$ substantially differ according
to the algebraic structure of $(\cF,\subseteq)$. We give here some fundamental
notions on posets which will be useful in the sequel (see,
e.g., Davey and Priestley \cite{dapr90} for details).

A \emph{partially ordered set} $(P,\leq)$, or \emph{poset} for short, is a set
$P$ endowed with a partial order $\leq$ (reflexive, antisymmetric and
transitive). As usual, the asymmetric part of $\leq$ is denoted by $<$. For
$x,y\in (P,\leq)$ (if no ambiguity occurs, we may write simply $P$), we write
$x\prec y$ and say that $x$ is \emph{covered} by $y$ if $x<y$ and there is no
$z\in P$ such that $x<z<y$. An element $x\in P$ is minimal if there is no $y\in
P$ such that $y<x$.

A \emph{chain} from $x$ to $y$ in $P$ is any sequence $x,x_1,\ldots,x_p,y$ of
elements of $P$ such that $x<x_1<\cdots< x_p<y$. The chain is \emph{maximal} if
no other chain from $x$ to $y$ contains it, i.e., if $x\prec
x_1\prec\cdots\prec x_p\prec y$. The \emph{length} of a chain is its number of elements
minus 1.

The \emph{height} of $x\in P$, denoted by $h(x)$, is the length of a longest
chain from a minimal element to $x$. Elements of
same height $l$ form \emph{level} $l+1$. Hence, level 1 (denoted by $L_1$) is the set
of all minimal elements, level 2 (denoted by $L_2$) is the set of minimal
elements of $P\setminus L_1$, etc. The height of $P$, denoted by $h(P$), is
the maximum of $h(x)$ taken over all elements of $P$.

Consider a poset $(P,\leq)$ and some $Q\subseteq P$. Then $Q$ is a
\emph{downset} of $P$ if $x\in Q$ and $y\leq x$ imply $y\in Q$. Any element
$x\in P$ generates a downset, defined by $\downarrow\! x:=\{y\in P\mid y\leq
x\}$. We denote by $\cO(P)$ the set of all downsets of $P$.

\medskip

A lattice $(L,\leq)$ is a poset having the following property: for any $x,y\in
L$, their supremum and infimum, denoted by $x\vee y$ and $x\wedge y$, exist in
$L$. When a lattice is finite, it has a greatest element $\top=\bigvee_{x\in
  L}x$ (top element), and a smallest element $\bot=\bigwedge_{x\in L}x$ (bottom
element). If $\vee,\wedge$ obey distributivity, then $L$ is said to be
\emph{distributive}. In a distributive lattice $L$, all
maximal chains from $\bot$ to $\top$ have same length $h(L)$.

Given a lattice $L$, an element $x\in L$, $x\neq\bot$, is
\emph{join-irreducible} if it cannot be expressed as the supremum of other
elements, or equivalently, if it covers a unique element. We denote by $\cJ(L)$
the set of join-irreducible elements of $L$. It can be shown that if $L$ is
distributive, its height $h(L)$ equals the number of join-irreducible elements
$|\cJ(L)|$. 

\medskip

Finite distributive lattices have a remarkable property: they are completely determined by
their join-irreducible elements. Specifically, consider $(L,\leq)$ a
distributive lattice, and $(\cJ(L),\leq)$ its join-irreducible elements
considered as a subposet of $L$. Then Birkhoff's theorem \cite{bir67} says that
$(L,\leq)$ and $(\cO(\cJ(L)),\subseteq)$ are isomorphic. Conversely, any poset
$(P,\leq)$ \emph{generates} a distributive lattice $(\cO(P),\subseteq)$ (hence, we deduce a
characterization of distributive lattices: the set of distributive lattices of
height $n$ is in bijection with the set of posets of $n$
elements). Figure~\ref{fig:birk} illustrates this fundamental result. 
\begin{figure}[htb]
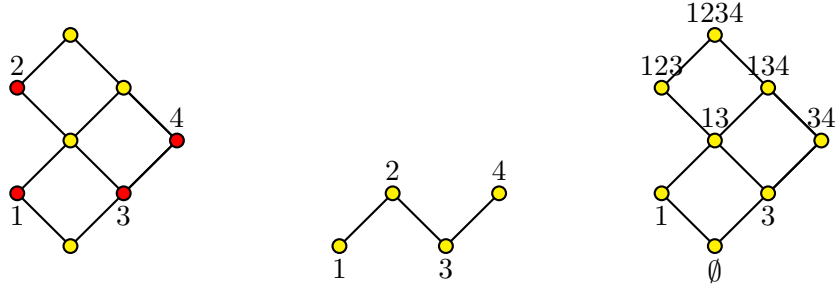

\begin{center}
\psset{unit=0.7cm}
\pspicture(0,0)(3,4)
\pspolygon(1,0)(3,2)(2,3)(0,1)
\pspolygon(2,1)(3,2)(1,4)(0,3)
\pscircle[fillstyle=solid,fillcolor=yellow](1,0){0.15}
\pscircle[fillstyle=solid,fillcolor=red](0,1){0.15}
\pscircle[fillstyle=solid,fillcolor=yellow](1,2){0.15}
\pscircle[fillstyle=solid,fillcolor=red](2,1){0.15}
\pscircle[fillstyle=solid,fillcolor=red](3,2){0.15}
\pscircle[fillstyle=solid,fillcolor=red](0,3){0.15}
\pscircle[fillstyle=solid,fillcolor=yellow](2,3){0.15}
\pscircle[fillstyle=solid,fillcolor=yellow](1,4){0.15}
\uput[-90](0,1){\small $1$}
\uput[-90](2,1){\small $3$}
\uput[90](3,2){\small $4$}
\uput[90](0,3){\small $2$}
\endpspicture
\hspace*{2cm}
\psset{unit=0.7cm}
\pspicture(0,0)(3,4)
\psline(0,0)(1,1)(2,0)(3,1)
\pscircle[fillstyle=solid,fillcolor=yellow](0,0){0.15}
\pscircle[fillstyle=solid,fillcolor=yellow](1,1){0.15}
\pscircle[fillstyle=solid,fillcolor=yellow](2,0){0.15}
\pscircle[fillstyle=solid,fillcolor=yellow](3,1){0.15}
\uput[-90](0,0){\small $1$}
\uput[90](1,1){\small $2$}
\uput[-90](2,0){\small $3$}
\uput[90](3,1){\small $4$}
\endpspicture
\hspace*{2cm}
\psset{unit=0.7cm}
\pspicture(0,0)(3,4)
\pspolygon(1,0)(3,2)(2,3)(0,1)
\pspolygon(2,1)(3,2)(1,4)(0,3)
\pscircle[fillstyle=solid,fillcolor=yellow](1,0){0.15}
\pscircle[fillstyle=solid,fillcolor=yellow](0,1){0.15}
\pscircle[fillstyle=solid,fillcolor=yellow](1,2){0.15}
\pscircle[fillstyle=solid,fillcolor=yellow](2,1){0.15}
\pscircle[fillstyle=solid,fillcolor=yellow](3,2){0.15}
\pscircle[fillstyle=solid,fillcolor=yellow](0,3){0.15}
\pscircle[fillstyle=solid,fillcolor=yellow](2,3){0.15}
\pscircle[fillstyle=solid,fillcolor=yellow](1,4){0.15}
\uput[-90](1,0){\small $\emptyset$}
\uput[-90](0,1){\small $1$}
\uput[-90](2,1){\small $3$}
\uput[90](0,3){\small $123$}
\uput[90](3,2){\small $34$}
\uput[90](1,2){\small $13$}
\uput[90](2,3){\small $134$}
\uput[90](1,4){\small $1234$}
\endpspicture
\end{center}
\caption{Left: a distributive lattice $L$. Join-irreducible elements are those in
  dark grey. Center: the poset $\cJ(L)$ of join-irreducible elements. Right: the set
  $\cO(\cJ(L))$ of all downsets of $\cJ(L)$ ordered by inclusion, which is
  isomorphic to $L$}
\label{fig:birk}
\end{figure}

\subsection{Main families of set systems}\label{sec:fam}
Among the numerous families of sets systems, we put emphasis on three of them:
distributive lattices of height $n$, regular set systems, and weakly union-closed systems.

$(\cF,\subseteq)$ is a distributive lattice is equivalent to say that $\cF$ is a
set system closed under union and intersection, and by Birkhoff's result, it is
generated by a poset of $n$ elements if and only if $\cF$ has height $n$. This poset can be
interpreted as the set of players $N$ endowed with some partial order $\leq$,
which can be thought of as a hierarchy on players or a precedence order (and
then we recover exactly games with precedence constraints of Faigle and Kern
\cite{fake92}): see again Figure~\ref{fig:birk}, center (the hierarchy) and
right (the set system).
\begin{remark}
We discard from the analysis distributive lattices $L$ of height smaller than $n$:
essentially, it amounts to redefine the set of players as $N'$ with $|N'|=h(L)$,
where some of the players of $N$ have been regrouped into ``macro-players''.  
\end{remark} 

A set system is \emph{regular} if all its maximal
chains from $\emptyset$ to $N$ are of length $n$ (see
\cite{hogr05,hogr05a,lagr06b} for works dealing with regular set systems).
Evidently, any set system closed under union and intersection of height $n$ is
regular, but the converse is not true. 

A set system $\cF$ is \emph{weakly union-closed} if for any $S_1,S_2\in \cF$
such that $S_1\cap S_2\neq\emptyset$, we have $S_1\cup S_2\in \cF$ (see \cite{fagrhe09}
and also \cite{albibolo01} where weakly union-closed systems are called union stable
structures).

These three families are distinct, and no family is included into another one
(see \cite{gra09a}).

\subsection{How to make the core bounded}\label{sec:rest}
Given a set system $\cF$, our main goal is to modify the definition of the core
to make it bounded for any game $v$, by replacing some of the inequalities by
equalities (evidently, the core will become bounded after a sufficient number of
such operations). Observe that doing so preserves the coalitional rationality
principle, and this can be interpreted as adding new efficiency constraints.

We call \emph{normal sets} the sets $S\in\cF$ corresponding to inequalities
$x(S)\geq v(S)$ turned into equalities $x(S)=v(S)$, provided the collection of
those sets makes the core bounded (recall that this is independent of $v$ since
it suffices to study the recession cone $\cC(0)$). Such a collection (called a \emph{normal
  collection}), is denoted by $\cN:=\{N_1,\ldots,N_q\}$, and we make the
convention that $N$ is \emph{not} an element of $\cN$. We call the core with
these additional equalities the \emph{core restricted by the normal collection
  $\cN$}, or if no ambiguity occurs, the \emph{restricted core}, and denote it
by $\cC_\cN(v)$.

As mentionned in the introduction, Grabisch and Xie have proposed a particular
way to define a normal collection when $\cF$ is a distributive lattice. Suppose
$\cF$ is a distributive lattice of height $n$, with generating poset
$(N,\leq)$. As mentionned in Section~\ref{sec:latt}, the height function on $(N,\leq)$
induces a partition of $N$ into levels $L_1,\ldots,L_q$. Then the normal
collection of Grabisch and Xie is simply $(L_1,L_1\cup L_2,\ldots,L_1\cup\cdots\cup
L_{q-1})$. Note that the obtained normal collection is \emph{nested}, i.e., it
forms a chain in $\cF$.

\section{Case of distributive lattices of height $n$}\label{sec:cloui}
\subsection{Normal sets}\label{sec:normal}
We know from the previous section that these set systems are closed under union
and intersection, that they possess $n$ join-irreducible elements, and that they
are generated by a poset $(N,\leq)$ (i.e., $\cF=\cO(N,\leq)$). We recall that
$i\prec j$ means that $i<j$ and there is no $k\in N$ such that $i<k<j$.

For those sets systems, we know the following result from Tomizawa
\cite{tom83}. We denote by $J_i$, $i\in N$, the join-irreducible element of
$\cF$ induced by $i$, that is simply, $J_i=\downarrow\!i$. 
\begin{theorem}\label{th:1}
The extremal rays of $\cC(0)$ are of the form $(1_j,-1_i)$, with $i\in N$ such
that $|J_i|>1$, $j\in J_i$ and $j\prec i$.
\end{theorem}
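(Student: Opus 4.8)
The plan is to realise $\cC(0)$ as the cone of divergences of a nonnegative flow on the Hasse diagram of the generating poset $(N,\leq)$; this yields simultaneously the list of generators and their extremality. First I would record that each candidate lies in $\cC(0)$: for $j\prec i$ the vector $1_j-1_i$ satisfies $(1_j-1_i)(N)=0$, and for every downset $S\in\cF=\cO(N,\leq)$ one has $(1_j-1_i)(S)=\mathbf 1[j\in S]-\mathbf 1[i\in S]\geq 0$, since $j<i$ forces $j\in S$ whenever $i\in S$. I would also note that $\cC(0)$ is pointed: if $x$ and $-x$ both belong to it then $x(S)=0$ for every downset $S$, and evaluating on the principal downsets $\downarrow\! k=J_k$ while inducting on the height of $k$ gives $x=0$.

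The core of the argument is the flow model. Orient the covering relation into a directed graph $D$ on $N$ with an arc $j\to i$ for each $j\prec i$, and read a vector $x$ with $x(N)=0$ as a prescribed divergence (net out-flow) at the nodes. I would invoke the Gale--Hoffman feasibility criterion: with all arc capacities infinite, a nonnegative flow $f$ on $D$ of divergence $x$ exists if and only if $x(U)\leq 0$ for every $U\subseteq N$ out of which no arc leaves. A set has no leaving arc exactly when it is an \emph{upset} of $(N,\leq)$, and since $x(N)=0$ the condition ``$x(U)\leq 0$ for all upsets $U$'' is equivalent, by complementation, to ``$x(S)\geq 0$ for all downsets $S$'', i.e. to $x\in\cC(0)$. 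As the divergence of a nonnegative flow $f=(f_{ji})_{j\prec i}$ is precisely $\sum_{j\prec i}f_{ji}\,(1_j-1_i)$, this proves
\[
\cC(0)=\mathrm{cone}\{\,1_j-1_i\mid j\prec i\,\}.
\]
In particular, every extremal ray of $\cC(0)$ is a positive multiple of some $1_j-1_i$ with $j\prec i$.

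It remains to check that each generator is genuinely extremal, so that (using pointedness) these are exactly the extremal rays. I would show that $1_j-1_i$ is not a nonnegative combination of the other generators: such a combination would be a nonnegative flow of divergence $1_j-1_i$ --- unit source at $j$, unit sink at $i$ --- that avoids the arc $j\to i$. Decomposing it into directed paths and cycles and using that $D$ is acyclic, the unit of flow would have to travel from $j$ to $i$ along a directed path $j\prec a\prec\cdots\prec i$ of length at least two, contradicting $j\prec i$. Hence $1_j-1_i$ is extremal. Finally, $|J_i|>1$ is equivalent to $i$ being non-minimal, i.e. to the existence of some $j\prec i$, and $j\prec i$ automatically gives $j\in J_i=\downarrow\! i$, which reconciles the description with the index set in the statement. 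The step I expect to be the main obstacle is the flow model: setting up the arc orientation so that the feasibility cuts are exactly the upsets, thereby reproducing precisely the defining inequalities of $\cC(0)$; once this is in place the extremality claim reduces to a short flow-decomposition.
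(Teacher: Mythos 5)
Your proof is correct, but it follows a genuinely different route from the paper. The paper does not actually prove Theorem~\ref{th:1} where it is stated---the result is quoted from Tomizawa \cite{tom83}---and the only derivation contained in the paper comes later, via regular set systems: Theorem~\ref{th:regu} shows by a direct chain-based argument that all extremal rays have the form $(1_i,-1_\ell)$ (an element $\ell$ ranked after $i$ in \emph{every} maximal chain yields a ray, and such a ray is non-extremal exactly when some $\ell_1$ always lies in between, since then $(1_i,-1_{\ell_2})=(1_i,-1_{\ell_1})+(1_{\ell_1},-1_{\ell_2})$), and Corollary~\ref{cor:2} specializes this to $\cF=\cO(N,\leq)$, where the admissible orders are the linear extensions of $\leq$. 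Your argument instead realizes $\cC(0)$ as the divergence cone of nonnegative flows on the Hasse diagram: the Gale--Hoffman criterion applies because the cuts with no leaving arc are exactly the upsets, whose complements are exactly the downsets defining $\cC(0)$, which gives $\cC(0)=\mathrm{cone}\{1_j-1_i\mid j\prec i\}$ in one stroke; extremality of each generator then follows from flow decomposition in an acyclic digraph, a directed path of length at least two from $j$ to $i$ being incompatible with the covering $j\prec i$. Your supporting steps are sound: membership of the generators in $\cC(0)$, pointedness of the cone (which is indeed what licenses the inference from ``not in the cone of the other generators'' to ``extremal''), and the remark that the side conditions $|J_i|>1$ and $j\in J_i$ are automatic consequences of $j\prec i$. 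As for what each approach buys: yours is self-contained (modulo the flow feasibility theorem), shorter, and delivers the explicit conic description of $\cC(0)$, from which facts like Lemma~\ref{lem:1} can also be read off cheaply; the paper's detour is heavier for this particular statement but more general---Theorem~\ref{th:regu} covers all regular set systems, and it feeds Corollary~\ref{cor:1} ($\cC(0)=\tC(0)$) and the ray-enumeration algorithm {\sc Algo 2}, with Theorem~\ref{th:1} recovered as the special case where the maximal chains are the linear extensions of $(N,\leq)$.
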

Here we use the notation $1_i$ for the vector of $\bR^N$ having component $i$ equals to 1
and 0 otherwise, and similarly for $(1_j,-1_i)$, etc.

Recall that $\cC(v)$ will become bounded if there is no more
extremal rays in $\cC(0)$. Therefore, we must
study how inequalities turned into equalities can ``kill'' extremal rays of $\cC(0)$. 

 The following can be proved.
\begin{lemma}\label{lem:1}
Consider $J_i$, $|J_i|>1$, $j\prec i$. The extremal ray $(1_j,-1_i)$ is killed
by equality $x(F)=0$ if and only if $j\in F$ and $i\not\in F$.
\end{lemma}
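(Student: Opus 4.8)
The plan is to read ``killed'' operationally. The ray $r := (1_j,-1_i)$ belongs to $\cC(0)$, hence in particular satisfies $r(F)\geq 0$ for every $F\in\cF$. When we turn the inequality $x(F)\geq 0$ into the equality $x(F)=0$, the ray $r$ remains in the (now more constrained) recession cone if and only if it satisfies $r(F)=0$; otherwise it is expelled, i.e.\ killed. So the whole question reduces to computing $r(F)$ and deciding when it is strictly positive.

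First I would compute $r(F)$ directly from the definition of $x(F)$ as a coordinate sum. Only the coordinates $j$ and $i$ of $r$ are nonzero (equal to $+1$ and $-1$ respectively), so
\[
r(F)=\sum_{k\in F}r_k=
\begin{cases}
1, & j\in F,\ i\notin F,\\
-1, & i\in F,\ j\notin F,\\
0, & \text{otherwise.}
\end{cases}
\]

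The only structural input needed is that every $F\in\cF$ is a downset of the generating poset $(N,\leq)$, since $\cF=\cO(N,\leq)$. As $j\prec i$ entails $j<i$, membership $i\in F$ forces $j\in F$ by the downset property; hence the middle case above is vacuous and $r(F)\in\{0,1\}$. Therefore $r(F)>0$ holds exactly when $j\in F$ and $i\notin F$, and this is precisely the condition under which the equality $x(F)=0$ kills the ray. I anticipate no genuine obstacle: the single point demanding care is the reduction of ``killing'' to the sign of $r(F)$ together with the observation that the downset structure of $\cF$ eliminates the negative case, leaving the clean ``$j\in F$ and $i\notin F$'' criterion.
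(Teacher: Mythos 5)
Your proof is correct and is essentially the paper's own argument in streamlined form: both reduce ``killing'' to whether $r(F)=\delta_F(j)-\delta_F(i)$ vanishes (the paper does this via the explicit perturbation $x'=x+\alpha(1_j,-1_i)$), and both invoke the downset structure of $\cF=\cO(N,\leq)$ together with $j\prec i$ to exclude the case $i\in F$, $j\notin F$. Your one-pass computation of $r(F)$ merely packages the paper's two implication directions into a single sign analysis; no step is missing.
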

\begin{proof}
($\Leftarrow$) Suppose that $j\in F$ and $i\not\in F$. Then, if $x\in \cC(0)$
  satisfies $x(F)=0$, we have
\begin{equation}\label{eq:1}
x(F) = x_j+\sum_{k\in F\setminus j}x_k= 0. 
\end{equation}
Consider now $x':=x+\alpha(1_j,-1_i)$, $\alpha>0$. Then $x'$ does not satisfy
equality $x'(F)=0$ since
\[
x'(F) = x_j+\alpha + \sum_{k\in F\setminus j}x_k = \alpha.
\]
Therefore,  $(1_j,-1_i)$ is no more a ray.

($\Rightarrow $) Let $x\in \cC(0)$ and satisfy the additional equality $x(F)=0$ for
some $F\in \cF$. Suppose that for some $\alpha>0$, $x':=x+\alpha (1_j,-1_i)$
does not belong to $\cC(v)\cap \{x(F)=0\}$. It means that
\[
\sum_{k\in F}x'_k = \sum_{k\in K}x_k + \alpha\delta_F(j)-\alpha\delta_F(i)\neq 0,
\]
where $\delta_F(k)=1$ if $k\in F$ and 0 otherwise. This implies
$\delta_F(i)\neq\delta_F(j)$, therefore either $i$ or $j$ belongs to $F$, but
not both. Because $j\prec i$ and that a set $F\in \cF$ corresponds to a downset
in $(N,\leq)$, it
must be $j\in F$ and $i\not\in F$.
\end{proof}
\begin{lemma}\label{lem:2}
The minimum number of additional equalities needed to make $\cC(v)$ bounded is $h(N)$.
\end{lemma}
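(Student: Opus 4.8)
The plan is to prove the claim in two halves: first that $h(N)$ additional equalities suffice (an upper bound on the minimum), and then that no fewer than $h(N)$ can ever work (a matching lower bound). Throughout I combine Theorem~\ref{th:1}, which tells us that the extremal rays of $\cC(0)$ are exactly the vectors $(1_j,-1_i)$ arising from covering pairs $j\prec i$ in the generating poset $(N,\leq)$ (with $|J_i|>1$), and Lemma~\ref{lem:1}, which reduces ``killing a ray $(1_j,-1_i)$ by the equality $x(F)=0$'' to the purely combinatorial condition that the downset $F\in\cF$ separates the pair, i.e.\ $j\in F$ and $i\notin F$. The whole problem thus becomes: find a minimum family of downsets of $(N,\leq)$ that separates every covering edge $j\prec i$ in this sense.

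For the upper bound I would take the Grabisch--Xie collection itself, namely the level sets $N_k:=L_1\cup\cdots\cup L_k$ for $k=1,\ldots,q-1$, where $L_1,\ldots,L_q$ are the levels of $(N,\leq)$ and $q=h(N)+1$; this family has exactly $q-1=h(N)$ members, and each $N_k$ is indeed a downset (if $x$ has height $\le k-1$ and $y\le x$ then $h(y)\le h(x)$, so $y$ lies in $N_k$). It remains to check separation: given a covering edge $j\prec i$, let $j\in L_{\ell}$. Since $j<i$ forces $h(i)>h(j)$, the element $i$ sits in a strictly higher level, so $j\in N_{\ell}$ while $i\notin N_{\ell}$, and $N_\ell$ belongs to the collection because $\ell\le h(N)$. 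Hence every extremal ray is killed, and $h(N)$ equalities are enough.

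For the lower bound, the key idea is to exhibit $h(N)$ extremal rays that no single equality can simultaneously attack. I would fix a longest chain $a_0\prec a_1\prec\cdots\prec a_{h(N)}$ realizing the height of $(N,\leq)$ (its maximality guarantees that each consecutive pair is a genuine covering relation, and $|J_{a_t}|\ge 2$ holds automatically since $a_{t-1}<a_t$). This chain produces $h(N)$ extremal rays $(1_{a_{t-1}},-1_{a_t})$, $t=1,\ldots,h(N)$. Now take any downset $F$: because $F$ is downward closed and the $a_t$ form a chain, the trace $F\cap\{a_0,\ldots,a_{h(N)}\}$ is a prefix $\{a_0,\ldots,a_m\}$, so the only edge $a_{t-1}\prec a_t$ with $a_{t-1}\in F$ and $a_t\notin F$ is the one at $t=m+1$. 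Thus one equality kills at most one of these $h(N)$ rays, and any normal collection must therefore contain at least $h(N)$ sets.

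Putting the two bounds together yields that the minimum number of additional equalities is exactly $h(N)$. I expect the lower bound to be the delicate step: the whole argument rests on the observation that a downset meets a maximal chain in a prefix, which forces each candidate equality to separate that chain in a unique place; verifying that the chosen chain edges really are extremal rays (the $|J_i|>1$ hypothesis of Theorem~\ref{th:1}) is then an easy check rather than an obstacle.
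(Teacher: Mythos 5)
Your proof is correct, and it differs from the paper's mainly in organization rather than in the key idea. Your lower bound is essentially the paper's own argument: the paper also fixes a maximal chain $i_0\prec i_1\prec\cdots\prec i_{h(N)}$ realizing the height and uses Lemma~\ref{lem:1} together with the downset property to show that the equalities killing the successive rays $(1_{i_{t-1}},-1_{i_t})$ must involve pairwise distinct sets; your formulation ``a downset meets the chain in a prefix, hence separates at most one covering edge'' is the same observation packaged slightly more cleanly (you bound the number of chain-rays each equality can kill, the paper shows each chain-ray forces a new set). The genuine difference is in the other half: the paper's proof of Lemma~\ref{lem:2} establishes only the lower bound, and sufficiency of $h(N)$ equalities is deferred to Theorem~\ref{th:2}, where {\sc Algo 1} is shown to kill all extremal rays using exactly $h(N)$ equalities. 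You instead make the lemma self-contained by exhibiting the Grabisch--Xie collection $L_1,\,L_1\cup L_2,\ldots,L_1\cup\cdots\cup L_{q-1}$ as an explicit witness, checking that it consists of $h(N)$ downsets and that it separates every covering pair $j\prec i$ (since $j<i$ forces $h(j)<h(i)$, so $i$ lies in a strictly higher level). Both routes are sound, and both rely on the standing fact that killing all extremal rays of $\cC(0)$ makes the core bounded. Your version buys a statement whose proof stands on its own; the paper's division of labor avoids duplication, since Theorem~\ref{th:2} anyway produces an optimal collection, and one with smallest possible normal sets rather than the comparatively large level-set unions you use as a witness.
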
 
\begin{proof}
Let us assume that all rays are killed.  By definition of the height, there
exists a maximal chain in $(N,\leq)$ of length $h(N)$ going from a minimal
element to a maximal element, say $i_0,i_1,\ldots,i_{h(N)}$. Then by
Theorem \ref{th:1}, $(1_{i_0},-1_{i_1})$, $(1_{i_1},-1_{i_2})$, \ldots,
$(1_{i_{h(N)-1}},-1_{i_{h(N)}})$ are extremal rays. Because $(1_{i_0},-1_{i_1})$ is
killed, by Lemma~\ref{lem:1} there must be an equality $x(K_1)=0$ such that $i_0\in
K_1$ and $i_1\not\in K_1$. Moreover, since $K_1$ must be a downset,
$i_2,\ldots,i_{h(N)}$ cannot belong to $K_1$. Similarly, there must exist an
equality $x(K_2)=0$ killing ray $(1_{i_1},-1_{i_2})$ such that $i_1\in K_2$ and
$i_2,\ldots,i_{h(N)}\not\in K_2$. Therefore, $K_1\neq K_2$. Continuing this
process we construct a sequence of distinct $h(N)$ subsets
$K_1,K_2,\ldots,K_{h(N)}$, the last one killing ray
$(1_{i_{h(N)-1}},-1_{i_{h(N)}})$. Therefore, at least $h(N)$ equalities are
needed.
\end{proof}
The next algorithm shows an optimal way to define equalities to kill all extremal
rays. It is optimal in the sense that it uses only $h(N)$ equalities and each
equality is the ``smallest'' possible (in the number of terms, or equivalently, in
the size of $F$).

\pagebreak
 
\begin{quote}
{\sc Algo 1}
\begin{description}
\item[Step 0] Initialization. Set $M=N$.
\item[Step 1]  Remove all disconnected elements
  from $M$ (i.e., those elements which are both minimal and maximal). If
  $M=\emptyset$, then STOP. Otherwise, go to Step 2.
\item[Step 2] Build $M_0$ the set of all minimal elements of $M$, and set
  equality $x(\downarrow \! M_0)=0$, where $\downarrow\! M_0$ is the downset
  generated by $M_0$ in $(N,\leq)$.
\item[Step 3] Set $M\leftarrow M\setminus M_0$, and  go to Step 1.
\end{description}
\end{quote}
\begin{theorem}\label{th:2}
{\sc Algo 1} kills all extremal rays and is optimal.
\end{theorem}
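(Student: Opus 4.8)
The plan is to drive everything from a single structural fact about the dynamics of {\sc Algo~1}, expressed via the height function $h$ on $(N,\leq)$, and then to read off the two halves of optimality using Theorem~\ref{th:1} and Lemmas~\ref{lem:1}--\ref{lem:2}. The engine is a \emph{timing lemma}, which I would prove by induction on the iteration index $t$: the set $M$ at the start of iteration $t$ (before Step~1) equals $\{k\in N\mid h(k)\geq t-1\}$. The inductive step rests on the elementary observation that the minimal elements of $\{k\mid h(k)\geq t-1\}$ are exactly the height-$(t-1)$ elements, and that all of them are removed during iteration $t$ (the maximal ones in Step~1, as isolated points of $M$ since nothing of $M$ lies below them; the others in Step~2). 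From this I extract the description actually used in the sequel, namely $M_0^{(t)}=\{j\mid h(j)=t-1,\ j\text{ non-maximal in }(N,\leq)\}$, so that the $t$-th equality produced is $x(\downarrow\!M_0^{(t)})=0$.

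To show {\sc Algo~1} kills every extremal ray, I take $(1_j,-1_i)$ with $j\prec i$ (the only possibility, by Theorem~\ref{th:1}) and set $t:=h(j)+1$. Then $j$ is non-maximal, so $j\in M_0^{(t)}\subseteq\downarrow\!M_0^{(t)}$, while $i\notin\downarrow\!M_0^{(t)}$, since any $m\geq i$ satisfies $h(m)\geq h(i)>h(j)=t-1$ and hence cannot belong to $M_0^{(t)}$. By Lemma~\ref{lem:1} the $t$-th equality kills this ray, and as every extremal ray has this form, all of them are killed.

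Optimality then splits into a count and a size estimate. For the count I would exhibit, along a longest chain $i_0\prec\cdots\prec i_{h(N)}$ with $h(i_s)=s$, that each iteration $t=1,\ldots,h(N)$ produces a nonempty $M_0^{(t)}$ (it contains the non-maximal element $i_{t-1}$), whereas iteration $h(N)+1$ finds only the pairwise incomparable top-height elements, which Step~1 deletes without creating an equality; hence exactly $h(N)$ equalities, meeting the lower bound of Lemma~\ref{lem:2}. For the size, the clean statement to prove is that $\downarrow\!M_0^{(t)}$ is the \emph{smallest} feasible set whose equality kills every ray $(1_j,-1_i)$ with $h(j)=t-1$: by Lemma~\ref{lem:1} any such killer must contain each non-maximal height-$(t-1)$ element, and being a downset of $(N,\leq)$ it must then contain their downset $\downarrow\!M_0^{(t)}$, which conversely does clear all these rays.

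The main obstacle is pinning down the \emph{right} reading of ``smallest possible'' and making sure it is genuinely minimal rather than merely locally greedy. The extremal rays partition naturally into $h(N)$ batches indexed by $t=h(j)+1$, and the argument above shows $\downarrow\!M_0^{(t)}$ is the minimal feasible set clearing batch $t$; the delicate point is that a single equality $x(F)=0$ could, in principle, kill rays drawn from several different batches at once. To upgrade ``minimal per batch'' into ``globally optimal'', I would reuse the chain-and-pigeonhole reasoning of Lemma~\ref{lem:2}: along a longest chain, the $h(N)$ rays force $h(N)$ distinct downsets, so in any normal collection of the minimum size $h(N)$ each equality is pinned to exactly one batch, and the per-batch minimality of $\downarrow\!M_0^{(t)}$ then yields that {\sc Algo~1} is optimal both in the number of equalities and in the size of each.
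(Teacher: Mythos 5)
Your timing lemma and the three things you extract from it are correct: {\sc Algo~1} kills every extremal ray, produces exactly $h(N)$ equalities (matching the lower bound of Lemma~\ref{lem:2}), and $\downarrow\!M_0^{(t)}$ is the unique smallest feasible set killing all rays $(1_j,-1_i)$ with $h(j)=t-1$. This is the same route as the paper (peel the poset level by level, invoke Lemma~\ref{lem:2} for the count, argue minimality of each $M_0$), but in a more rigorous form; your per-batch minimality statement is a clean formalization of the paper's informal ``each equality is the smallest possible''.

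The gap is in your last paragraph, the upgrade from ``minimal per batch'' to ``globally optimal''. The chain-and-pigeonhole argument of Lemma~\ref{lem:2} pins each set $K$ of a minimum-size normal collection to one \emph{chain ray} $(1_{i_{t-1}},-1_{i_t})$, not to a whole batch: nothing forces the set that kills the chain ray of batch $t$ to kill \emph{all} rays $(1_j,-1_i)$ with $h(j)=t-1$, so per-batch minimality cannot be applied to it. In fact the statement you are aiming at is false. Take $N=\{1,2,3,4,5\}$ with covering relations $1\prec 2\prec 3$ and $4\prec 5$, so $h(N)=2$ and the extremal rays are $(1_1,-1_2)$, $(1_2,-1_3)$, $(1_4,-1_5)$. {\sc Algo~1} outputs $N_1=\{1,4\}$ and $N_2=\{1,2\}$. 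But $K_1=\{1\}$, $K_2=\{1,2,4\}$ is also a normal collection of size $2$: both are downsets, $K_1$ kills $(1_1,-1_2)$, and $K_2$ kills the other two rays. Here $K_1$ is pinned by the longest chain $1\prec 2\prec 3$ to batch $1$, yet it neither kills the batch-$1$ ray $(1_4,-1_5)$ (that one is killed by $K_2$, which is pinned to batch $2$) nor contains $\downarrow\!M_0^{(1)}=\{1,4\}$. Moreover no bijection matches {\sc Algo~1}'s sets into $\{K_1,K_2\}$ componentwise, since neither $\{1,4\}$ nor $\{1,2\}$ is contained in $\{1\}$; the two collections are simply incomparable (size multisets $(2,2)$ versus $(1,3)$), so a componentwise-smallest minimum-size collection does not exist in general. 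The fix is to drop the upgrade: the reading of ``optimal'' that Theorem~\ref{th:2} can support --- and the one the paper itself proves --- is exactly what your earlier steps deliver, namely the minimum number $h(N)$ of equalities together with each output set being the smallest feasible set clearing its own batch.
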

\begin{proof}
Steps 1 and 2 build subsets of the level sets of $(N,\leq)$, except the last $h(N)$th level, because
in Step 2, all maximal elements of $N$ are suppressed. Therefore, the algorithm
necessarily finishes in exactly $h(N)$ iterations, and builds $h(N)$
equalities. By Lemma~\ref{lem:2}, this number is optimal.

Consider the first occurrence of Step 2, where $M_0$ is the set of minimal
elements of $N$ (minus those disconnected). Clearly, the equality $x(M_0)=0$ kills all rays of
the form $(1_j,-1_i)$, where $j$ is a minimal element and $i$ is a successor of
$j$ (i.e., $j\prec i$). Therefore, all such $i$'s belong to the level 2. Taking
a proper subset of $M_0$ will necessarily leave some rays of this form, and
subsequent iterations will not kill them. This proves that in each step $M_0$
has a minimal size.

For each iteration, it is not necessary to keep elements $i$ 
which have no successors (i.e., they are maximal), because there cannot exist
rays of the form $(1_i,-1_k)$.  Therefore those elements are suppressed in Step
1. All other elements are necessary since they have a successor and therefore
generate a ray. This proves that in any iteration, $M_0$ has the minimal size,
and so $\downarrow\! M_0$ too. 
\end{proof}

We call the normal collection $\cN$ found by
{\sc Algo 1} the collection of \emph{irredundant normal sets} or \emph{irredundant
  (normal) collection}. We introduce another one, which we call the collection
of \emph{Weber normal sets} or the \emph{Weber (normal) collection} (reasons for
this will be clear after). Supposing $\cN=\{N_1,\ldots,N_{h(N)}\}$ is the
  irredundant collection, the Weber collection is $\{N_1,N_1\cup N_2,N_1\cup
  N_2\cup N_3,\ldots,N_{1}\cup\cdots\cup N_{h(N)} \}$.
\begin{lemma}\label{lem:3}
The Weber collection is a normal collection which is a chain in $\cO(N)$. 
\end{lemma}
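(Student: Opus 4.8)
The plan is to prove the two assertions separately: first that the Weber collection consists of nested feasible coalitions (a chain in $\cO(N)$), and second that these coalitions, turned into equalities, kill every extremal ray of $\cC(0)$, so that the collection is normal. Throughout I write $W_m:=N_1\cup\cdots\cup N_m$ for $m=1,\ldots,h(N)$, so that the Weber collection is $\{W_1,\ldots,W_{h(N)}\}$.

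For the chain property, recall that each set $N_k$ produced by {\sc Algo 1} is of the form $\downarrow\! M_0^{(k)}$, hence is a downset of $(N,\leq)$. Since an arbitrary union of downsets is again a downset, every $W_m$ is a downset, i.e.\ $W_m\in\cO(N)=\cF$. The inclusions $W_1\subseteq W_2\subseteq\cdots\subseteq W_{h(N)}$ are immediate from the definition, so it only remains to see that they are strict. This follows from the observation already used in the proof of Theorem~\ref{th:2}, that the set $M_0^{(k)}$ built at the $k$-th iteration is a nonempty subset of the level $L_k$; consequently $N_k$ contains elements of height $k-1$, whereas $W_{k-1}$ contains only elements of height at most $k-2$, giving $W_{k-1}\subsetneq W_k$. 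Hence the Weber collection is a chain of $h(N)$ distinct elements of $\cO(N)$.

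For the normal property, I would show that each extremal ray is killed by a suitable Weber set and then invoke Lemma~\ref{lem:1}. By Theorem~\ref{th:1} a ray has the form $(1_j,-1_i)$ with $j\prec i$. I set $m:=h(j)+1$, the level of $j$. Since $j$ has the successor $i$ it is not maximal, so $j\in M_0^{(m)}\subseteq N_m\subseteq W_m$, which gives $j\in W_m$. On the other hand, because the height function is strictly increasing along $<$, from $j<i$ we get $h(i)>h(j)=m-1$; as every generator of $W_m$ lies in $\bigcup_{k\leq m}L_k$ and thus has height at most $m-1$, no such generator can dominate $i$, whence $i\notin W_m$. By Lemma~\ref{lem:1} the equality $x(W_m)=0$ therefore kills $(1_j,-1_i)$. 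Since $m=h(j)+1\leq h(N)$, the set $W_m$ really belongs to the Weber collection, so every extremal ray is killed, the recession cone reduces to $\{0\}$, and the Weber collection is normal.

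The only genuinely delicate point is the precise description of the sets $N_k$ relative to the level decomposition of $(N,\leq)$ --- namely that $M_0^{(k)}=L_k\setminus\{\text{maximal elements of }N\}$, so that $N_k=\downarrow\!\big(L_k\setminus\mathrm{Max}(N)\big)$, together with the strict monotonicity $a<b\Rightarrow h(a)<h(b)$. Once this is in hand (it is essentially the content of the proof of Theorem~\ref{th:2}), the choice $m=h(j)+1$ makes both $j\in W_m$ and $i\notin W_m$ transparent, and Lemma~\ref{lem:1} closes the argument; I expect no further obstacle.
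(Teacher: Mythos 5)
Your proof is correct and takes essentially the same route as the paper's (much terser) argument: normality via Lemma~\ref{lem:1}, using the fact that passing from the irredundant sets $N_k$ to their unions only adds elements of strictly smaller height, so for every extremal ray $(1_j,-1_i)$ the tail $j$ lies in a Weber set that still excludes the head $i$, while the chain property holds by construction. The paper compresses all of this into one parenthetical sentence; your level-by-level description of the sets produced by {\sc Algo 1} (namely $M_0^{(k)}=L_k\setminus\mathrm{Max}(N)$) just supplies the details it leaves implicit.
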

\begin{proof}
Lemma~\ref{lem:1} shows that the collection is normal (only elements below those
in the irredundant sets are added). The second assertion is obvious by construction.
\end{proof}
Recall that the normal collection introduced by Grabisch and Xie is
$(L_1,L_1\cup L_2,\ldots,L_1\cup\cdots\cup L_{q-1})$, where $L_1,\ldots,L_q$ are
the level sets of $(N,\leq)$. By construction, $N_1\subseteq L_1$, $N_2\subseteq
L_1\cup L_2$, etc., with proper inclusion in general. This shows that in general
the three normal collections introduced so far differ.

When a normal collection forms a chain, we say that the collection is
\emph{nested}. Note that the Weber collection is the ``smallest'' nested
collection, in the sense that no other nested collection can contain proper
subsets of the Weber collection. Indeed, it is built from the irredundant normal
collection by adding the minimum number of elements to make the collection a
chain. 

Interestingly, the normal collection of Grabisch and Xie is also nested, and it is the
``largest'' nested collection\footnote{Note that this collection is still
  optimal in number of normal sets. ``Largest'' applies here for the size of the
normal sets.}, in the sense that no other nested collection can
contain supersets of this normal collection. Indeed, since a normal set is
built from the union of all level sets up to a given height, adding a new
element $i$ means adding an element from a higher level. Then $(1_k,-1_i)$ for
some $k\prec i$ is an extremal ray, which will not be killed if $i$ is
incorporated into the normal set. Consequently, any nested collection (with
optimal number of normal sets) is comprised between the Weber collection and the
Grabisch-Xie collection.

The following example illustrates that the three normal collections differ.
\begin{example}
Consider the following poset $(N,\leq)$ of 9 elements.
\begin{center}
\psset{unit=0.5cm}
\pspicture(0,-0.5)(8,4.5)
\pspolygon(2,0)(2,2)(4,4)(4,2)
\psline(0,2)(2,0)
\psline(6,2)(4,4)(8,2)
\psline(8,0)(8,4)
\pscircle[fillstyle=solid,fillcolor=yellow](0,2){0.2}
\pscircle[fillstyle=solid,fillcolor=yellow](2,0){0.2}
\pscircle[fillstyle=solid,fillcolor=yellow](2,2){0.2}
\pscircle[fillstyle=solid,fillcolor=yellow](4,2){0.2}
\pscircle[fillstyle=solid,fillcolor=yellow](4,4){0.2}
\pscircle[fillstyle=solid,fillcolor=yellow](6,2){0.2}
\pscircle[fillstyle=solid,fillcolor=yellow](8,0){0.2}
\pscircle[fillstyle=solid,fillcolor=yellow](8,2){0.2}
\pscircle[fillstyle=solid,fillcolor=yellow](8,4){0.2}
\uput[180](0,2){\small 9}
\uput[180](2,0){\small 1}
\uput[180](2,2){\small 4}
\uput[180](4,2){\small 5}
\uput[90](4,4){\small 7}
\uput[180](6,2){\small 2}
\uput[0](8,0){\small 3}
\uput[0](8,2){\small 6}
\uput[0](8,4){\small 8}
\endpspicture
\end{center}
Level 1 is $\{1,2,3\}$, level 2 is $\{4,5,6,9\}$ and level 3 is
$\{7,8\}$. Extremal rays are 
\[
(1_1,-1_9),(1_1,-1_4),(1_1,-1_5),(1_3,-1_6),(1_4,-1_7),(1_5,-1_7),(1_2,-1_7),(1_6,-1_7),(1_6,-1_8). 
\]
The two irredundant normal sets built by {\sc Algo 1} are 123 and 13456, the two
Weber normal sets are 123 and 123456, and the Grabisch-Xie normal sets are 123
and 1234569.
\end{example}

\subsection{The Weber set}\label{sec:weber}
Let us denote by $\eC$ the set of all maximal chains from $\emptyset$ to $N$ in
$\cF$. Consider any maximal chain $C\in \eC$ and its associated
permutation $\sigma$ on $N$, i.e.,
\[
C=\{\emptyset,S_1,S_2,\ldots,S_n=N\},
\] 
with $S_i:=\{\sigma(1),\ldots, \sigma(i)\}$, $i=1,\ldots,n$.  Considering a game
$v$ on $\cF$, the \emph{marginal vector} $x^C\in \bR^N$ associated to $C$ is the
payoff vector defined by
\[
x^C_{\sigma(i)}:= v(S_i) - v(S_{i-1}), \quad i\in N.
\]
The \emph{Weber set} is the convex hull of all marginal vectors:
\[
\cW(v):=\mathrm{conv}(x^C\mid C\in \eC).
\]
In the classical case $\cF=2^N$, it is well known that for any game $v$ it holds
$\cC(v)\subseteq \cW(v)$, with equality if and only if $v$ is convex. In our
general case, this inclusion cannot hold any more since the core is unbounded in
general. We propose a restricted version of the Weber set so that the classical
results still hold.

Consider a nested normal collection (like the Weber collection or the
Grabisch-Xie one) $\cN=\{N_1,\ldots,N_{h(N)}\}$. A \emph{restricted maximal
  chain} (w.r.t. $\cN$) is a maximal chain from $\emptyset$ to $N$ in $\cO(N)$
containing $\cN$. A \emph{restricted marginal vector} is a (classical) marginal vector
whose underlying maximal chain is restricted. The \emph{(restricted) Weber set} $\cW_\cN(v)$
is the convex hull of all restricted marginal vectors w.r.t. $\cN$. The
(unrestricted) Weber set corresponds to the situation $\cN=\emptyset$. 
\begin{lemma}\label{lem:0}
For any restricted maximal chain $C$, its associated restricted marginal vector
$x^C$ coincides with $v$ on $C$, i.e., $x^C(S)=v(S)$ for all $S\in C$.
\end{lemma}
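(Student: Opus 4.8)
The plan is to reduce the claim to a single telescoping computation, exactly as in the classical case $\cF=2^N$. First I would observe that since $\cF=\cO(N)$ is a distributive lattice of height $n$, every maximal chain from $\emptyset$ to $N$ has length $n$ (as recalled in Section~\ref{sec:latt}), so $C$ consists of the $n+1$ sets $S_0=\emptyset, S_1, \ldots, S_n=N$ with $S_k=\{\sigma(1),\ldots,\sigma(k)\}$, consecutive sets differing by exactly one player. This is precisely what makes the permutation $\sigma$ and hence the marginal vector $x^C$ well defined. In particular, every $S\in C$ equals some $S_k$.

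Next I would fix an arbitrary $S=S_k\in C$ and unfold the definitions. Using the shorthand $x^C(S_k)=\sum_{i\in S_k}x^C_i=\sum_{j=1}^k x^C_{\sigma(j)}$ together with $x^C_{\sigma(j)}=v(S_j)-v(S_{j-1})$, I obtain
\[
x^C(S_k)=\sum_{j=1}^k x^C_{\sigma(j)}=\sum_{j=1}^k\bigl(v(S_j)-v(S_{j-1})\bigr)=v(S_k)-v(S_0)=v(S_k),
\]
where the penultimate step is the telescoping of the sum and the last uses $S_0=\emptyset$ with the convention $v(\emptyset)=0$. Since $S_k=S$ was an arbitrary element of $C$, this yields $x^C(S)=v(S)$ for all $S\in C$, as required.

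There is no real obstacle here: the argument is a routine telescoping sum and carries no genuine difficulty. The only point worth flagging is that the hypothesis that $C$ is \emph{restricted} (i.e. that $C$ contains the nested normal collection $\cN$) is never used; the identity holds for every maximal chain of $\cO(N)$. I would state the lemma for restricted chains only because these are the chains relevant to the definition of the restricted Weber set $\cW_\cN(v)$, and the content is simply the familiar fact that a marginal vector agrees with the game along its own generating chain. This is also exactly what will be needed later to show that every restricted marginal vector lies in $\cC_\cN(v)$.
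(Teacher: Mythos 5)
Your proof is correct and is exactly the computation the paper has in mind: the paper dismisses this lemma with ``(obvious from the definition)'', and your telescoping sum $x^C(S_k)=\sum_{j=1}^k\bigl(v(S_j)-v(S_{j-1})\bigr)=v(S_k)$ is precisely that obvious argument made explicit. Your side remark that the ``restricted'' hypothesis is never used is also accurate; the identity holds for any maximal chain, restricted or not.
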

(obvious from the definition)

We recall the following result (see Fujishige and Tomizawa
\cite{futo83,fuj05b}).
\begin{theorem}\label{th:0}
Let $v$ be a game on $\cO(N)$. Then $\cC(v)=\cW(v)$ if and only if $v$ is convex.
\end{theorem}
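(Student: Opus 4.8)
The plan is to prove both directions of the equivalence in Theorem~\ref{th:0}, characterizing convexity of $v$ on $\cO(N)$ via the equality $\cC(v)=\cW(v)$. Recall that here $\cF=\cO(N,\leq)$ is a distributive lattice closed under union and intersection, so a natural notion of convexity is the supermodularity condition $v(S\cup T)+v(S\cap T)\geq v(S)+v(T)$ for all $S,T\in\cF$ (well-defined since $\cF$ is closed under $\cup$ and $\cap$). The two inclusions $\cW(v)\subseteq\cC(v)$ and $\cC(v)\subseteq\cW(v)$ will be handled separately, and each will turn out to be governed by convexity in a familiar way, mirroring the classical theory of Shapley for $\cF=2^N$.

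First I would prove that convexity implies $\cW(v)\subseteq\cC(v)$. By Lemma~\ref{lem:0}, each marginal vector $x^C$ satisfies $x^C(S)=v(S)$ on every set $S$ of its generating chain $C$, and in particular $x^C(N)=v(N)$, so efficiency holds. The task is to show coalitional rationality $x^C(S)\geq v(S)$ for \emph{every} $S\in\cF$, not only those on the chain. For a fixed feasible $S$ and chain $C=\{\emptyset,S_1,\ldots,S_n=N\}$, the standard argument writes $x^C(S)$ as a telescoping sum $\sum_{\sigma(i)\in S}\bigl(v(S_i)-v(S_{i-1})\bigr)$ and, using supermodularity of $v$ repeatedly to compare $v(S_i)-v(S_{i-1})$ with $v\bigl((S_{i-1}\cap S)\cup\sigma(i)\bigr)-v(S_{i-1}\cap S)$, telescopes down to $v(S)-v(\emptyset)=v(S)$. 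The one point requiring care is that every intermediate set appearing in this argument, namely $S_i\cap S$, lies in $\cF$: this holds precisely because $\cF$ is closed under intersection, so supermodularity can legitimately be invoked at each step. Since this holds for each vertex $x^C$ of $\cW(v)$ and $\cC(v)$ is convex, the convex hull $\cW(v)$ lies in $\cC(v)$.

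Next I would establish $\cC(v)\subseteq\cW(v)$ under convexity, which by Theorem~\ref{th:1} is where boundedness of $\cC(v)$ matters. Since $\cW(v)$ is by definition a bounded polytope while $\cC(v)$ may a priori be unbounded, I would first note that this inclusion forces $\cC(v)$ to be bounded, hence by Theorem~\ref{th:1} its recession cone must be trivial; in the convex case one checks directly that the extremal rays $(1_j,-1_i)$ cannot belong to $\cC(0)$, or equivalently establishes that every extreme point of $\cC(v)$ is a marginal vector. The cleanest route is to show that each vertex $x$ of $\cC(v)$ is the marginal vector of some maximal chain: a vertex is determined by $n$ tight linearly independent constraints $x(S)=v(S)$ with $S\in\cF\cup\{N\}$, and because $\cF$ is closed under $\cup$ and $\cap$, the family of tight sets is itself closed under these operations; one then extracts from it a maximal chain $\emptyset=S_0\subset S_1\subset\cdots\subset S_n=N$ of tight sets, whence $x=x^C$. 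Supermodularity is what guarantees this maximal chain of tight sets exists and that the resulting $x^C$ is feasible.

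The converse direction, $\cC(v)=\cW(v)\implies v$ convex, I would prove by contraposition: if $v$ fails supermodularity, there exist $S,T\in\cF$ with $v(S\cup T)+v(S\cap T)<v(S)+v(T)$, and I would exhibit a point of $\cC(v)$ outside $\cW(v)$ (or show some marginal vector violates a core inequality, i.e. $\cW(v)\not\subseteq\cC(v)$) using a separating-hyperplane argument keyed to the pair $(S,T)$. I expect the main obstacle to be the direction $\cC(v)\subseteq\cW(v)$, specifically verifying that the tight sets at a vertex genuinely close into a single saturated chain inside $\cO(N)$; the classical proof for $2^N$ relies heavily on every subset being feasible, and the delicate point here is to confirm that closure of $\cF$ under union and intersection is exactly the structural property of $\cO(N)$ that rescues this step. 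Since the statement is explicitly attributed to Fujishige and Tomizawa \cite{futo83,fuj05b}, I would lean on their submodular/base-polyhedron framework, where $\cC(v)$ is the base polyhedron of a supermodular function on the distributive lattice $\cO(N)$ and the equality with the convex hull of greedy (marginal) vertices is the known description of such base polyhedra.
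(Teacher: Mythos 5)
There is an important preliminary point: the paper contains no proof of Theorem~\ref{th:0} at all --- it is recalled, without argument, as a known result of Fujishige and Tomizawa --- so your proposal cannot be compared to an internal proof and must stand on its own. It does not, because of one irreparable step. In your treatment of the inclusion $\cC(v)\subseteq\cW(v)$ you observe (correctly) that this inclusion would force $\cC(v)$ to be bounded, and you then claim that ``in the convex case one checks directly that the extremal rays $(1_j,-1_i)$ cannot belong to $\cC(0)$.'' This is false: the recession cone $\cC(0)$ does not depend on $v$ in any way. As the paper stresses in Section~\ref{sec:ss}, the extremal rays of $\cC(v)$ for \emph{any} game $v$ are exactly those of $\cC(0)$, and Theorem~\ref{th:1} says these rays are the vectors $(1_j,-1_i)$ for the covering pairs $j\prec i$ of the generating poset. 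Convexity of $v$ cannot ``kill'' any of them. Hence, whenever $(N,\leq)$ has at least one comparability (i.e., $\cO(N)\neq 2^N$), the core of \emph{every} game is unbounded while $\cW(v)$ is a polytope. A two-element example makes this concrete: for $N=\{1,2\}$ with $1<2$, one has $\cO(N)=\{\emptyset,1,12\}$, every game on it is convex (all pairs of feasible sets are nested, so supermodularity holds trivially), yet $\cC(v)$ is a half-line and $\cW(v)$ a single point. So the equality $\cC(v)=\cW(v)$, read literally, fails for every nontrivial poset, and no proof strategy can establish it; your attempt to argue around the unboundedness is where the proposal breaks.

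The salvageable content --- and the only content the paper actually invokes later, in the proof of Theorem~\ref{th:4} --- is the Fujishige--Tomizawa statement that for convex $v$ the \emph{extreme points} of $\cC(v)$ are precisely the marginal vectors, equivalently $\cC(v)=\cW(v)+\cC(0)$ (Minkowski sum with the recession cone). Your ``cleanest route'' is exactly the right argument for that corrected statement: at a vertex the tight sets form a sublattice of $\cO(N)$ (this uses supermodularity together with the core inequalities), one extracts from it a maximal chain, and the vertex is the corresponding marginal vector. Likewise your first inclusion, $\cW(v)\subseteq\cC(v)$ for convex $v$ via the telescoping supermodularity argument (legitimate because $\cO(N)$ is intersection-closed), is sound. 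So the defect is conceptual rather than technical: you should restate the theorem as ``the vertices of $\cC(v)$ are exactly the marginal vectors (equivalently, $\cC(v)=\cW_{}(v)+\cC(0)$) if and only if $v$ is convex,'' drop the boundedness discussion entirely, and then your outline goes through; as written, however, the proposal purports to prove a claim that is false, via a step that contradicts the paper's own Theorem~\ref{th:1}.
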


The following theorems generalize results of \cite{grxi07} and provide more
elegant proofs.  
\begin{theorem}\label{th:3}
Consider $\cN$ a nested normal collection. Then for every game $v$ on $\cO(N)$,
$\cC_\cN(v)\subseteq \cW_\cN(v)$.
\end{theorem}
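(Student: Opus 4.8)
The plan is to reduce the statement to the classical inclusion $\cC\subseteq\cW$ on Boolean lattices by exploiting the chain structure of $\cN$. Write $\cN=\{N_1,\dots,N_{h(N)}\}$ with $N_1\subset\cdots\subset N_{h(N)}$, and set $N_0:=\emptyset$, $N_{h(N)+1}:=N$, together with the blocks $B_k:=N_k\setminus N_{k-1}$ for $k=1,\dots,h(N)+1$. The first and decisive step is to prove that every block $B_k$ is an antichain of $(N,\leq)$. Indeed, if two comparable elements of the same $B_k$ existed, then (all intermediate elements of a covering chain between them again lying in $B_k$, since $B_k$ sits between two downsets) $B_k$ would contain a covering pair $a\prec a'$; by Theorem~\ref{th:1} this gives an extremal ray $(1_a,-1_{a'})$ of $\cC(0)$. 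Now $a,a'\in N_k$ and $a,a'\notin N_{k-1}$, so nestedness forces each $N_l$ to contain both or neither of $a,a'$; by Lemma~\ref{lem:1} no equality of $\cN$ kills this ray, which therefore survives in the recession cone of $\cC_\cN(v)$, contradicting the normality of $\cN$. Hence each $B_k$ is an antichain, so $N_{k-1}\cup T\in\cF$ for every $T\subseteq B_k$; equivalently the interval $[N_{k-1},N_k]$ of $\cO(N)$ is isomorphic to the Boolean lattice $2^{B_k}$.

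Next I would attach to each block the game $v_k$ on $2^{B_k}$ defined by $v_k(T):=v(N_{k-1}\cup T)-v(N_{k-1})$ for $T\subseteq B_k$ (note $v_k(\emptyset)=0$). Using the equalities $x(N_{k-1})=v(N_{k-1})$ and $x(N_k)=v(N_k)$ carried by $\cN$, one checks directly that for $x\in\cC_\cN(v)$ the restriction $x|_{B_k}\in\bR^{B_k}$ satisfies $x|_{B_k}(T)=x(N_{k-1}\cup T)-x(N_{k-1})\geq v(N_{k-1}\cup T)-v(N_{k-1})=v_k(T)$ for all $T\subseteq B_k$, and $x|_{B_k}(B_k)=v(N_k)-v(N_{k-1})=v_k(B_k)$. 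Thus $x\in\cC_\cN(v)$ implies $x|_{B_k}\in\cC(v_k)$ for every $k$.

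I would then decompose the restricted Weber set accordingly, using the coordinate splitting $\bR^N=\prod_k\bR^{B_k}$. Because each $B_k$ is an antichain, every ordering of $B_k$ is admissible, and concatenating one ordering per block yields exactly the maximal chains of $\cO(N)$ containing $\cN$ (here one uses that all predecessors of any $b\in B_k$ lie in $N_{k-1}$, so each prefix is a downset); conversely every restricted maximal chain arises this way. By the telescoping defining the marginal vector, the restricted marginal vector of such a chain restricts on each $B_k$ to the marginal vector of $v_k$ for the corresponding block ordering (this is the content of Lemma~\ref{lem:0}). Since all combinations of block orderings occur, the set of restricted marginal vectors is the Cartesian product of the sets of marginal vectors of the $v_k$, and hence $\cW_\cN(v)=\prod_k\cW(v_k)$, the convex hull of a Cartesian product of finite point sets being the product of their convex hulls. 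Finally, each $B_k$ being an antichain, $v_k$ is an ordinary game on the Boolean lattice $2^{B_k}$, so the classical inclusion $\cC(v_k)\subseteq\cW(v_k)$ recalled in Section~\ref{sec:weber} applies. Combining the last three paragraphs, for $x\in\cC_\cN(v)$ we obtain $x|_{B_k}\in\cC(v_k)\subseteq\cW(v_k)$ for all $k$, whence $x\in\prod_k\cW(v_k)=\cW_\cN(v)$.

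The main obstacle is the antichain claim of the first step: everything downstream is essentially bookkeeping, but the whole reduction to Boolean blocks hinges on ruling out comparable pairs inside a block, which is exactly where the normality (boundedness) hypothesis must be invoked through Theorem~\ref{th:1} and Lemma~\ref{lem:1}. A secondary point requiring care is the identification $\cW_\cN(v)=\prod_k\cW(v_k)$: one must verify that every concatenation of block orderings is a genuine maximal chain of $\cO(N)$ passing through $\cN$ and that no restricted chain is overlooked, so that the product of vertex sets is captured exactly.
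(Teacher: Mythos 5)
Your proof is correct, and it takes a genuinely different route from the paper's. The paper proves the inclusion directly by a separation argument (in the style of Derks): assuming $x\in\cC_\cN(v)\setminus\cW_\cN(v)$, it separates $x$ from $\cW_\cN(v)$ by a vector $y$, builds from $y$ a permutation $\pi'$ compatible with the nested collection by sorting each block according to $y$, and reaches a contradiction via Abel summation, normality entering only through the claim that any prefix $\{\pi'(1),\ldots,\pi'(i)\}$ at which $y$ increases must be one of the sets $N_j$, where $x$ satisfies an equality. You instead reduce to the classical inclusion $\cC\subseteq\cW$ on Boolean lattices, via the key structural fact that normality of a nested collection forces every block $N_k\setminus N_{k-1}$ to be an antichain of $(N,\leq)$, proved by exhibiting a surviving extremal ray $(1_a,-1_{a'})$ (Theorem~\ref{th:1} and Lemma~\ref{lem:1}) if a covering pair existed inside a block. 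This buys two things the paper's argument does not give: the explicit product decompositions $\cW_\cN(v)=\prod_k\cW(v_k)$ and $\cC_\cN(v)\subseteq\prod_k\cC(v_k)$, and a precise localization of where boundedness is used. Notably, the paper's own proof tacitly relies on your antichain fact: without it, the prefixes of $\pi'$ need not be downsets, so $m^{\pi'}$ need not be a well-defined restricted marginal vector; your Step 1 is exactly what justifies that construction. The price is self-containedness: you invoke the classical Weber inclusion $\cC(v_k)\subseteq\cW(v_k)$ as a black box (the paper only recalls it informally in Section~\ref{sec:weber}), whereas the separation proof establishes the restricted statement from scratch. Two cosmetic points: your indexing assumes $|\cN|=h(N)$, but a nested normal collection may have more sets; nothing in your argument uses this, so simply write $|\cN|=q$. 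Also, the blockwise telescoping identity you attribute to Lemma~\ref{lem:0} is really a direct computation slightly stronger than that lemma, though the verification is immediate.
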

\begin{proof}
We put $\cN:=\{N_1,\ldots,N_q\}$.  We prove the result by the separation
theorem, proceeding as in \cite{der92}. Suppose there exists
$x\in\cC_\cN(v)\setminus \cW_\cN(v)$. Then it exists $y\in\bR^n$ such that
$\langle w,y\rangle > \langle x,y\rangle$ for all $w\in\cW_\cN(v)$.

Let $\pi$ be a permutation on $N$ such that $y_{\pi(1)}\geq y_{\pi(2)}\geq
\cdots\geq y_{\pi(n)}$. Let us build a permutation $\pi'$ from $\pi$ so that
$\pi'$ corresponds to a restricted maximal chain as follows:
\begin{quote}
Order the elements of $N_1$ according to the $\pi$ order; then order the elements of
$N_2$ according to the $\pi$ order and put them after ; etc. Lastly, put the
remaining elements (in $N\setminus (N_1\cup\cdots\cup N_q )$) according to the
$\pi$ order.
\end{quote}
Note that $\pi'=\pi$ if $\pi$ corresponds to a restricted maximal chain.
With Example 1 and the Weber collection, taking $\pi= 1,4,5,2,9,3,6,7,8$ leads
to $\pi' =1,2,3,4,5,6,9,7,8 $.

Denoting by $m^{\pi'}$ the marginal vector associated to $\pi'$ we have
\begin{align*}
\langle m^{\pi'},y\rangle & = \sum_{i=1}^n
y_{\pi'(i)}\big(v(\{\pi'(1),\ldots,\pi'(i)\}) -
v(\{\pi'(1),\ldots,\pi'(i-1)\})\big)\\
 & = y_{\pi'(n)}v(N) + \sum_{i=1}^{n-1}(y_{\pi'(i)} - y_{\pi'(i+1)})v(\{\pi'(1),\ldots,\pi'(i)\}).
\end{align*}
We claim that if $y_{\pi'(i)} - y_{\pi'(i+1)}<0$ then $\{\pi'(1),\ldots,\pi'(i)\}$
is a normal set. Indeed, by construction of $\pi'$, the situation $y_{\pi'(i)} -
y_{\pi'(i+1)}<0$ can arise only if $\pi'(i)\in N_j$ for some $j$ and
$\pi'(i+1)\in N_{j+1}$. But then by construction again
$N_j=\{\pi'(1),\ldots,\pi'(i)\}$, which proves the claim. 

Therefore since $x\in\cC_\cN(v)$  we have
\begin{align*}
\langle m^{\pi'},y\rangle & \leq y_{\pi'(n)}x(N) + \sum_{i=1}^{n-1}(y_{\pi'(i)}
- y_{\pi'(i+1)})x(\{\pi'(1),\ldots,\pi'(i)\})\\
 & = \sum_{i=1}^ny_{\pi'(i)}x(\{\pi'(1),\ldots,\pi'(i)\}) - \sum_{i=2}^n
y_{\pi'(i)}x(\{\pi'(1),\ldots,\pi'(i-1)\}) \\
 & = \sum_{i=1}^ny_{\pi'(i)}x_{\pi'(i)} = \langle y,x\rangle,
\end{align*}
a contradiction with the assumption.
\end{proof}
\begin{theorem}\label{th:4}
Consider $\cN$ a nested normal collection.
If $v$ is convex on $\cO(N)$, then $\cC_\cN(v)=\cW_\cN(v)$.
\end{theorem}
\begin{proof}
By Theorem~\ref{th:3}, it suffices to show that any restricted marginal vector
is a vertex of $\cC_\cN(v)$. We know already from Theorem~\ref{th:0} that it is
a vertex of $\cC(v)$. It remains to show that any marginal vector satisfies the
normality conditions $x(N_i)=v(N_i)$, $i=1,\ldots,q$, but this is established in
Lemma~\ref{lem:0}. 
\end{proof}

\section{The general case}\label{sec:gene}
We suppose now that $\cF$ is an arbitrary set system. We introduce $\tF$ the
closure of $\cF$ under union and intersection, i.e., the smallest set system
closed under union and intersection containing $\cF$. It is obtained by
iteratively augmenting $\cF$ with unions and intersections of pairs of subsets
of the current set system (starting with $\cF$), till there is no more change in
the set system. As in Section~\ref{sec:cloui}, we assume that $\tF$ has height
$n$ (i.e., it has $n$ join-irreducible elements).
\begin{theorem}\label{th:gene}
Consider an arbitrary set system $\cF$, and assume that its
closure $\tF$ has height $n$. Denote by $\cC(0)$ and $\tC(0)$ the recession
cones generated by $\cF$ and $\tF$. Then $\cC(0)$ and $\tC(0)$ have the same
extremal rays (i.e., $\cC(0)=\tC(0)$) if and only if all extremal rays of
$\cC(0)$ are of the form $(1_j,-1_i)$, for some $i,j\in N$.
\end{theorem}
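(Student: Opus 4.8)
```latex
The plan is to prove the two directions separately, exploiting the fact that
$\cF\subseteq\tF$ forces the reverse inclusion $\tC(0)\subseteq\cC(0)$ on the
recession cones (fewer constraints in $\cC(0)$ means a larger cone). Thus the
equality $\cC(0)=\tC(0)$ always reduces to showing $\cC(0)\subseteq\tC(0)$, i.e.,
that every $x$ satisfying $x(S)\geq 0$ for all $S\in\cF$ (with $x(N)=0$) also
satisfies $x(T)\geq 0$ for all $T\in\tF$. The point of the theorem is to
characterize exactly when this holds in terms of the shape of the extremal rays.

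For the direction ($\Leftarrow$), I would assume every extremal ray of $\cC(0)$
has the form $(1_j,-1_i)$ and show $\cC(0)\subseteq\tC(0)$. Since $\cC(0)$ is a
pointed cone (it lies in the hyperplane $x(N)=0$ and, as the rays are of the
stated simple form, contains no line), every $x\in\cC(0)$ is a nonnegative
combination $x=\sum_k\alpha_k(1_{j_k},-1_{i_k})$ of its extremal rays. It then
suffices to check that each generator $(1_j,-1_i)$ lies in $\tC(0)$, i.e.,
satisfies $x(T)\geq 0$ for every $T\in\tF$: for such a generator $x(T)$ equals
$1$ if $j\in T,i\notin T$, equals $-1$ if $i\in T,j\notin T$, and $0$ otherwise,
so the only danger is a set $T\in\tF$ with $i\in T$ and $j\notin T$. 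The key
sublemma is that no such $T$ can exist in $\tF$, because $(1_j,-1_i)$ being an
extremal ray of $\cC(0)$ means $x(S)\geq 0$ is consistent for all $S\in\cF$,
which translates (via Lemma~\ref{lem:1}-style reasoning on downsets) into the
combinatorial statement that in $\tF=\cO(N,\leq)$ one has $j\prec i$, forcing
every downset containing $i$ to contain $j$.

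For the direction ($\Rightarrow$), I would argue the contrapositive: if some
extremal ray of $\cC(0)$ is \emph{not} of the form $(1_j,-1_i)$, then
$\cC(0)\neq\tC(0)$. By Theorem~\ref{th:1} the cone $\tC(0)$ generated by the
closure $\tF$ has \emph{all} its extremal rays of the simple form
$(1_j,-1_i)$. Hence an extremal ray $r$ of $\cC(0)$ with more than two nonzero
components, or with two components of the same sign, cannot be an extremal ray of
$\tC(0)$; since the two cones share the ambient hyperplane $x(N)=0$ but have
different extremal-ray sets, they cannot coincide. Concretely, such an $r$
witnesses a direction that is feasible for the coarser system $\cF$ but is cut
off by one of the additional inequalities $x(T)\geq 0$, $T\in\tF\setminus\cF$,
so $r\in\cC(0)\setminus\tC(0)$.

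The main obstacle I anticipate is the sublemma in the ($\Leftarrow$) direction:
translating ``$(1_j,-1_i)$ is an extremal ray of $\cC(0)$'' into the precise
order relation $j\prec i$ in the generating poset of $\tF$, and confirming that
this is exactly what guarantees $(1_j,-1_i)\in\tC(0)$. This requires knowing how
the constraints of $\cF$ versus those of $\tF$ act on a candidate ray, and in
particular that passing to the closure $\tF$ only adds constraints $x(T)\geq 0$
that are \emph{already implied} by the $\cF$-constraints precisely when the rays
have the two-component form. Establishing this implication carefully — rather
than the purely formal inclusion of cones — is where the real content of the
theorem lies.
```
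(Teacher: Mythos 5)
Your overall skeleton matches the paper's: the inclusion $\tC(0)\subseteq\cC(0)$ is free because $\tF\supseteq\cF$ just adds constraints; your ($\Rightarrow$) direction is a rewording of the paper's appeal to Theorem~\ref{th:1} applied to the distributive lattice $\tF$; and your ($\Leftarrow$) direction correctly reduces to showing that each extremal ray $(1_j,-1_i)$ of $\cC(0)$ satisfies every constraint $x(T)\geq 0$, $T\in\tF$, equivalently that no $T\in\tF$ contains $i$ but not $j$. But precisely at that point --- which you yourself flag as ``the main obstacle'' and ``where the real content of the theorem lies'' --- your proposal stops. Gesturing at ``Lemma~\ref{lem:1}-style reasoning on downsets'' yielding $j\prec i$ in the generating poset of $\tF$ is not an argument: Lemma~\ref{lem:1} operates entirely inside a distributive lattice and says nothing about how constraints indexed by $\cF$ relate to sets of the closure $\tF$; and deducing the order relation $j\leq i$ in the poset generating $\tF$ from the hypothesis that no $S\in\cF$ contains $i$ without $j$ is exactly the statement that needs proving, so invoking it is circular. (Moreover the covering relation $j\prec i$ is more than you need or can hope for; only $j\leq i$ is relevant.)

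The missing step has a short direct proof, and it is the heart of the paper's argument: the property ``every set containing $i$ also contains $j$'' is preserved under pairwise unions and intersections. Indeed, if $S_1\cap S_2\ni i$ then both $S_1,S_2\ni i$, hence both contain $j$, hence $S_1\cap S_2\ni j$; and if $S_1\cup S_2\not\ni j$ then neither $S_1$ nor $S_2$ contains $j$, hence neither contains $i$, hence $S_1\cup S_2\not\ni i$. Since $\tF$ is built from $\cF$ by iterating exactly such pairwise operations, induction on the construction shows that no $T\in\tF$ contains $i$ without $j$, so the ray survives all the new constraints. With this sublemma inserted, your proof goes through; your explicit pointedness remark, used to write every element of $\cC(0)$ as a nonnegative combination of its extremal rays, is a legitimate way to pass from rays to the whole cone (the paper takes the same step implicitly, and pointedness itself can be justified by the same closure argument together with the height-$n$ assumption: a line would force two players to lie in exactly the same sets of $\tF$).
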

\begin{proof}
The ``only if'' part is obvious from Theorem~\ref{th:1}. Let us prove the ``if'' part.
Suppose $r$ is an extremal ray of $\cC(0)$. By hypothesis, it has the
form $(1_j,-1_i)$ for some $i,j\in N$. Also, by definition, it satisfies the
system $r(S)\geq 0$ for all $S\in \cF$, which gives $1_S(j)-1_S(i)\geq 0$ for
all $S\in \cF$, which implies that there is no $S\in \cF$ such that $S\ni i$ and
$S\not\ni j$. Therefore it suffices to show that no such $S$ exists in $\tF$. We
show this by induction since $\tF$ is obtained iteratively from $\cF$. We first
prove that the union or intersection of two sets $S_1,S_2$ of $\cF$ cannot at
the same time contain $i$ and not $j$. For intersection, if $S_1\cap S_2\ni i$,
then $S_1,S_2$ too, so they cannot contain $j$, which implies $S_1\cap
S_2\not\ni j$. Now, suppose that $S_1\cup S_2$ does not
contain $j$, which implies that neither $S_1$ nor $S_2$ contain $j$. If
$i\in S_1\cup S_2$, then $i$ belongs at least to one of the sets $S_1,S_2$,
which contradicts the hypothesis. Assume now that the hypothesis holds up to
some step in the iteration process. Clearly, the same reasoning applies again, which
proves that $r$ is a ray of $\tC(0)$. Hence we have proved $\cC(0)\subseteq \tC(0)$.

Conversely, suppose $r$ is an extremal  ray of $\tC(0)$, hence of the form
$(1_j,-1_i)$ by Theorem~\ref{th:1}. Then it satisfies the
system $r(S)\geq 0$ for all $S\in \tF$, and $r(N)=0$. Hence in particular it
satisfies the system $r(S)\geq 0$ for all $S\in \cF$ and $r(N)=0$, and therefore
$r$ is a ray of $\cC(0)$. Therefore $\tC(0)\subseteq \cC(0)$. Hence, we have
proved $\cC(0)=\tC(0)$ and so extremal rays of $\cC(v)$ and $\tC(v)$ are identical.
\end{proof}

Unfortunately, not all set systems $\cF$, even if $\tF$ has height $n$, induce
extremal rays of the form $(1_j,-1_i)$, as shown in the next example. 
\begin{example}\label{ex:2}
Consider $N=\{1,2,3,4\}$, the following set system $\cF$ and its closure $\tF$.
\begin{figure}[htb]
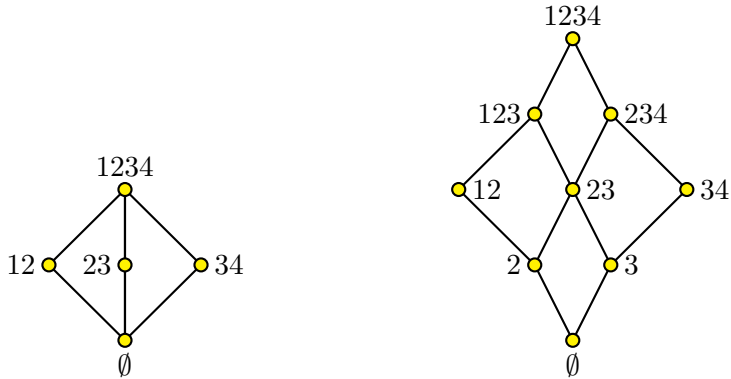

\begin{center}
\psset{unit=0.5cm}
\pspicture(0,-0.5)(4.5,4.5)
\pspolygon(2,0)(0,2)(2,4)
\pspolygon(2,0)(4,2)(2,4)
\pscircle[fillstyle=solid,fillcolor=yellow](0,2){0.2}
\pscircle[fillstyle=solid,fillcolor=yellow](2,2){0.2}
\pscircle[fillstyle=solid,fillcolor=yellow](4,2){0.2}
\pscircle[fillstyle=solid,fillcolor=yellow](2,0){0.2}
\pscircle[fillstyle=solid,fillcolor=yellow](2,4){0.2}
\uput[-90](2,0){\small $\emptyset$}
\uput[180](0,2){\small $12$}
\uput[180](2,2){\small $23$}
\uput[0](4,2){\small $34$}
\uput[90](2,4){\small $1234$}
\endpspicture
\hspace*{3cm}
\psset{unit=0.5cm}
\pspicture(0,-0.5)(6.5,8.5)
\pspolygon(3,0)(2,2)(3,4)(4,2)
\pspolygon(3,4)(2,6)(3,8)(4,6)
\psline(2,2)(0,4)(2,6)
\psline(4,2)(6,4)(4,6)
\pscircle[fillstyle=solid,fillcolor=yellow](3,0){0.2}
\pscircle[fillstyle=solid,fillcolor=yellow](2,2){0.2}
\pscircle[fillstyle=solid,fillcolor=yellow](4,2){0.2}
\pscircle[fillstyle=solid,fillcolor=yellow](0,4){0.2}
\pscircle[fillstyle=solid,fillcolor=yellow](3,4){0.2}
\pscircle[fillstyle=solid,fillcolor=yellow](6,4){0.2}
\pscircle[fillstyle=solid,fillcolor=yellow](2,6){0.2}
\pscircle[fillstyle=solid,fillcolor=yellow](4,6){0.2}
\pscircle[fillstyle=solid,fillcolor=yellow](3,8){0.2}
\uput[-90](3,0){\small $\emptyset$}
\uput[180](2,2){\small $2$}
\uput[0](4,2){\small $3$}
\uput[0](0,4){\small $12$}
\uput[0](3,4){\small $23$}
\uput[0](6,4){\small $34$}
\uput[180](2,6){\small $123$}
\uput[0](4,6){\small $234$}
\uput[90](3,8){\small $1234$}
\endpspicture
\caption{Set system $\cF$ (left) and its closure under union and intersection
  $\tF$ (right)}
\label{fig:1}
\end{center}
\end{figure}
The extremal rays of $\cF$ are $(1,-1,1,-1)$, $(-1,1,-1,1)$ and $(0,0,1,-1)$,
while the extremal rays of $\tF$ are $(-1,1,0,0)$ and $(0,0,1,-1)$. Note that
the first two rays of $\cF$ in fact define a line, and that $\cF$ is neither
regular nor weakly union-closed. 
\end{example} 

Suppose now that $\cF$ has rays of the form $(1_j,-1_i)$. How to kill them?
Lemma~\ref{lem:1} tells us how to kill rays of $\cF$, by considering the
equality $x(F)=0$ with $j\in F$ and $i\not\in F$. Therefore, the only thing we
have to prove is that in any case, such a set $F$ exists in $\cF$. 
\begin{lemma}
Let $\cF$ be a set system such that all extremal rays of $\cC(0)$ are of the
form $(1_j,-1_i)$. Then for each extremal ray $(1_j,-1_i)$, there exists a set
$F\in \cF$ such that $j\in F$ and $i\not\in F$.
\end{lemma}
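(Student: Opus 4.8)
The plan is to argue by contradiction, using the extremality of the ray together with the observation that membership of $r:=(1_j,-1_i)$ in the recession cone already tightly constrains how $i$ and $j$ occur in feasible sets. First I would record what $r\in\cC(0)$ means concretely: for every $S\in\cF$ we have $r(S)=1_S(j)-1_S(i)\geq 0$, i.e. $i\in S$ forces $j\in S$. (Note $i\neq j$, since otherwise $r=0$ and there is no ray to consider.) This is exactly the one-sided implication; the content of the lemma is that the reverse situation must be witnessed by some feasible $F$.

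Next I would assume the conclusion fails: suppose no $F\in\cF$ satisfies $j\in F$ and $i\notin F$. Equivalently, $j\in S$ forces $i\in S$ for every $S\in\cF$. Combining this with the implication from the previous paragraph yields $i\in S\iff j\in S$ for all $S\in\cF$; that is, under the negation of the claim the players $i$ and $j$ are indistinguishable by the feasible coalitions.

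From this equivalence I would deduce that $-r=(1_i,-1_j)$ is itself an element of $\cC(0)$: indeed $(-r)(S)=1_S(i)-1_S(j)=0\geq 0$ for all $S\in\cF$, and $(-r)(N)=0$. Hence $\cC(0)$ contains both $r$ and $-r$, i.e. the whole line $\bR r$.

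The final, and main, step is to turn this into a contradiction with extremality. The key polyhedral fact is that an extremal ray $r$ of a cone cannot have $-r$ in the cone: using the algebraic characterisation of an extreme ray (whenever $r=x+y$ with $x,y\in\cC(0)$, both $x$ and $y$ must be nonnegative multiples of $r$), the decomposition $r=2r+(-r)$ with $-r\in\cC(0)$ exhibits a summand, namely $-r$, that is not a nonnegative multiple of $r$, violating extremality. I expect this last step to need the most care, precisely because $\cC(0)$ may carry a nontrivial lineality space, so one must invoke the algebraic/edge definition of extremal ray rather than rely on a picture of a pointed cone. It is worth stressing that extremality is genuinely indispensable here: if $i$ and $j$ always occur together then $(1_j,-1_i)$ is still a ray (indeed $\pm r\in\cC(0)$) yet no separating $F$ exists, so the statement would be false if stated for arbitrary rather than extremal rays.
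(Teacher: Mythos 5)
Your proof is correct if ``extremal ray'' is taken in the standard algebraic sense (the characterisation you quote), and it follows a genuinely different route from the paper's. The paper does not work inside $\cC(0)$ at all: it first uses the standing assumption of Section~\ref{sec:gene} that the closure $\tF$ has height $n$, together with Theorem~\ref{th:gene} (which gives $\cC(0)=\tC(0)$ under the lemma's hypothesis) and Theorem~\ref{th:1}, to produce a witness $F_0\in\tF$ with $j\in F_0$ and $i\notin F_0$; it then shows that the negated conclusion (``every $F\in\cF$ containing $j$ contains $i$'') is preserved under unions and intersections of pairs of sets, hence would survive the whole closure process and hold in $\tF$, contradicting the existence of $F_0$. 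Your argument is shorter and more self-contained: the negated conclusion combined with the ray inequalities makes $i$ and $j$ indistinguishable over $\cF$, so $-r\in\cC(0)$, and the decomposition $r=2r+(-r)$ contradicts algebraic extremality. It needs neither the closure, nor the height-$n$ assumption, nor Theorems~\ref{th:gene} and~\ref{th:1}; and your closing observation that extremality (rather than mere membership in the recession cone) is indispensable is exactly the right diagnosis.

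One caveat should be recorded, because it is the only point on which your proof could be challenged. The paper's working notion of extremal ray is looser than the one you invoke: in Example~\ref{ex:2} it lists both $(1,-1,1,-1)$ and $(-1,1,-1,1)$ --- two opposite generators of a line of $\cC(0)$ --- among the ``extremal rays'', so directions of the lineality space count as extremal rays there (and the ``if'' direction of Theorem~\ref{th:gene} actually requires this reading: under the strict definition a cone containing a line has no extremal rays at all, so the hypothesis would hold vacuously for the $\cF$ of Example~\ref{ex:2} while the conclusion fails). Under that looser convention your final step collapses --- $-r\in\cC(0)$ no longer contradicts extremality --- and the lemma would even be false without further hypotheses: on $N=\{1,2\}$ with $\cF=\{\emptyset,N\}$, both $\pm(1_1,-1_2)$ are ``extremal rays'' in the loose sense, yet no separating $F$ exists. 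What saves the statement in the paper's reading is precisely the standing assumption that $h(\tF)=n$: if $i$ and $j$ were indistinguishable in $\cF$, this would propagate to $\tF$ (by the same union/intersection induction as in the paper's proof) and force $h(\tF)<n$. So with the textbook definition your proof is complete as written; with the paper's convention it has a gap exactly at the degenerate case you yourself identified, and closing it requires the height-$n$ assumption that your argument never invokes.
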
 
\begin{proof}
We consider the ray $(1_j,-1_i)$. We know that in $\tF$ it exists $F_0$ such
that $j\in F_0$ and $i\not\in F_0$. Suppose that no such $F$ exists in $\cF$ and
show that in this case $F_0$ cannot exist in $\tF$. We suppose therefore that
in $\cF$ all sets satisfy either $F\not\ni j$ or $F\ni i$ and we consider two
sets $F_1,F_2$. Observe that we have four possible situations: 1) $F_1\not\ni j$
and $F_2\not\ni j$, 2) $F_1\ni i,j$ and $F_2\ni i,j$ 3) $F_1\not\ni j$ and
$F_2\ni i,j$, and 4) $F_1\ni i,j$ and $F_2\not\ni j$. In all four situations,
we cannot have both $F_1\cup F_2\ni j$ and $F_1\cup F_2\not\ni i$, and the same is
true for $F_1\cap F_2$. Therefore, after one iteration, the
set system has the same property than $\cF$, and so by successive iterations,
$F_0$ cannot be built.
\end{proof}
The above lemma tells us that it is possible to kill rays for such set systems
by turning at most $r$ inequalities to equalities, if $r$ is the number of
rays. Is it possible to give a better answer by using results from
Section~\ref{sec:normal} on $\tF$? Unfortunately, it does not seem possible to
give a general answer here, even for regular set systems. This is because the
irredundant normal sets found by {\sc Algo 1} or the Weber normal collection of
$\tF$ need not belong to $\cF$, as the following simple example shows.
\begin{example}\label{ex:4}
Consider $N=\{1,2,3,4\}$, the following set system $\cF$ (which is regular) and its closure $\tF$.
\begin{figure}[htb]
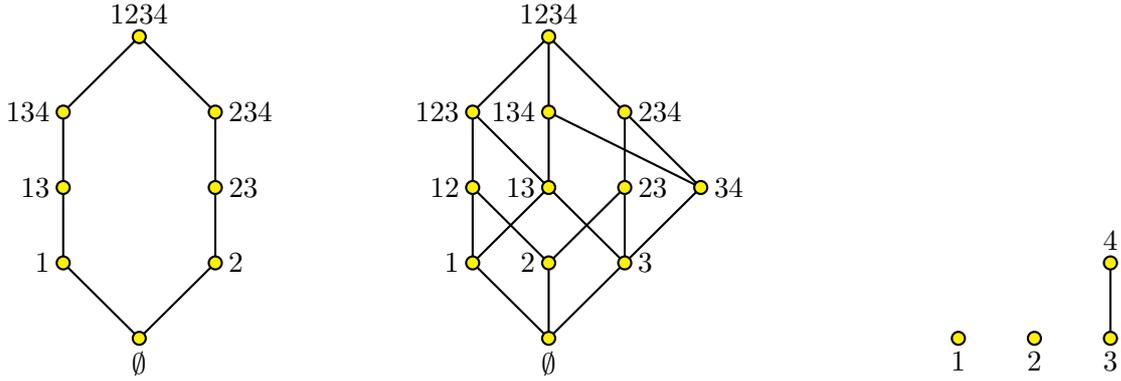

\begin{center}
\psset{unit=0.5cm}
\pspicture(0,-0.5)(4.5,8.5)
\pspolygon(2,0)(0,2)(0,4)(0,6)(2,8)(4,6)(4,4)(4,2)
\pscircle[fillstyle=solid,fillcolor=yellow](0,2){0.2}
\pscircle[fillstyle=solid,fillcolor=yellow](0,4){0.2}
\pscircle[fillstyle=solid,fillcolor=yellow](0,6){0.2}
\pscircle[fillstyle=solid,fillcolor=yellow](2,0){0.2}
\pscircle[fillstyle=solid,fillcolor=yellow](2,8){0.2}
\pscircle[fillstyle=solid,fillcolor=yellow](4,6){0.2}
\pscircle[fillstyle=solid,fillcolor=yellow](4,4){0.2}
\pscircle[fillstyle=solid,fillcolor=yellow](4,2){0.2}
\uput[-90](2,0){\small $\emptyset$}
\uput[180](0,2){\small $1$}
\uput[180](0,4){\small $13$}
\uput[180](0,6){\small $134$}
\uput[90](2,8){\small $1234$}
\uput[0](4,6){\small $234$}
\uput[0](4,4){\small $23$}
\uput[0](4,2){\small $2$}
\endpspicture
\hspace*{3cm}
\psset{unit=0.5cm}
\pspicture(0,-0.5)(6.5,8.5)
\pspolygon(2,0)(0,2)(0,4)(0,6)(2,8)(4,6)(4,4)(4,2)
\psline(2,0)(2,2)(0,4)
\psline(0,2)(2,4)(4,2)(6,4)(4,6)
\psline(2,2)(4,4)
\psline(0,6)(2,4)(2,6)(6,4)
\psline(2,6)(2,8)
\pscircle[fillstyle=solid,fillcolor=yellow](0,2){0.2}
\pscircle[fillstyle=solid,fillcolor=yellow](0,4){0.2}
\pscircle[fillstyle=solid,fillcolor=yellow](0,6){0.2}
\pscircle[fillstyle=solid,fillcolor=yellow](2,0){0.2}
\pscircle[fillstyle=solid,fillcolor=yellow](2,2){0.2}
\pscircle[fillstyle=solid,fillcolor=yellow](2,4){0.2}
\pscircle[fillstyle=solid,fillcolor=yellow](2,6){0.2}
\pscircle[fillstyle=solid,fillcolor=yellow](2,8){0.2}
\pscircle[fillstyle=solid,fillcolor=yellow](4,6){0.2}
\pscircle[fillstyle=solid,fillcolor=yellow](4,4){0.2}
\pscircle[fillstyle=solid,fillcolor=yellow](4,2){0.2}
\pscircle[fillstyle=solid,fillcolor=yellow](6,4){0.2}
\uput[-90](2,0){\small $\emptyset$}
\uput[180](0,2){\small $1$}
\uput[180](0,4){\small $12$}
\uput[180](0,6){\small $123$}
\uput[90](2,8){\small $1234$}
\uput[180](2,6){\small $134$}
\uput[180](2,4){\small $13$}
\uput[180](2,2){\small $2$}
\uput[0](4,6){\small $234$}
\uput[0](4,4){\small $23$}
\uput[0](4,2){\small $3$}
\uput[0](6,4){\small $34$}
\endpspicture
\hspace*{3cm}
\psset{unit=0.5cm}
\pspicture(0,-0.5)(4.5,2.5)
\psline(4,0)(4,2)
\pscircle[fillstyle=solid,fillcolor=yellow](0,0){0.2}
\pscircle[fillstyle=solid,fillcolor=yellow](2,0){0.2}
\pscircle[fillstyle=solid,fillcolor=yellow](4,0){0.2}
\pscircle[fillstyle=solid,fillcolor=yellow](4,2){0.2}
\uput[-90](0,0){\small $1$}
\uput[-90](2,0){\small $2$}
\uput[-90](4,0){\small $3$}
\uput[90](4,2){\small $4$}
\endpspicture
\caption{Set system $\cF$ (left), its closure under union and intersection
  $\tF$ (center), and the generating poset $(N,\leq)$ (right)}
\label{fig:3}
\end{center}
\end{figure}
The unique ray of $\cC(0)$ is (0,0,1,-1). Application of {\sc Algo 1} on $\tF$ gives
as normal set 3 (the Weber normal set is therefore the same). However, 3 does not belong
to $\cF$. Either 13 or 23 can be taken instead. Note that the Grabisch-Xie
normal set is 123, which does not belong either to $\cF$.  
\end{example}
Hence, the only thing which can be done is to build $\tF$, apply {\sc Algo 1} or
compute the Weber normal collection. If some normal sets do not belong to $\cF$,
take the smallest ones of $\cF$ containing them and obeying
Lemma~\ref{lem:1}. It is not guaranted however that we do not need more normal
sets than for $\tF$ (but we do not have an example for this). 

In the rest of the paper, we study two particular types of sets systems, namely
regular set systems and weakly union-closed set systems, which both generalize
systems closed under union and intersection, and where the above results can be applied.

\subsection{The case of regular set systems}
Recall that any maximal chain induces a total order (permutation) on $N$, and
therefore giving a regular set system $\cF$ is equivalent to giving a set of
(permitted) total orders on $N$.
\begin{theorem}\label{th:regu}
Suppose $\cF$ is a regular set system. Then all extremal rays of $\cC(0)$ have
the form $(1_l,-1_m)$ for some $l,m\in N$.
\end{theorem}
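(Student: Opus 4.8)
The plan is to show that every extremal ray of $\cC(0)$, for a regular set system $\cF$, must have the special two-coordinate form $(1_l,-1_m)$. The natural strategy is to analyze the structure of $\cC(0)$ directly via its defining inequalities and exploit what it means for a ray to be extremal in a polyhedral cone. Recall that an extremal ray $r$ of $\cC(0)$ is a one-dimensional face, characterized by the fact that the inequalities $r(S)\geq 0$ (for $S\in\cF$) and the equality $r(N)=0$ which are \emph{tight} at $r$ span a hyperplane arrangement whose solution set is exactly the line through $r$. So first I would fix an extremal ray $r$ and collect the tight constraints, i.e.\ the sets $S\in\cF$ with $r(S)=0$, together with the always-tight equality $r(N)=0$.

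The key idea is to use regularity to produce a \emph{maximal chain} passing through any feasible set, which gives a very rigid system of equalities on $r$. Specifically, I would first argue that if $r$ is extremal, the collection of tight sets must be rich enough to pin down $r$ up to scaling. The crucial observation is that, because $\cF$ is regular, every maximal chain from $\emptyset$ to $N$ has length $n$ and hence visits sets of every cardinality $0,1,\dots,n$; the successive differences $S_i\setminus S_{i-1}$ along such a chain are singletons, and these correspond to a permutation $\sigma$ of $N$. If along some maximal chain \emph{all} the sets $S_1,\dots,S_{n-1}$ were tight for $r$, then from $r(S_i)=0$ for consecutive $i$ we would get $r_{\sigma(i)}=0$ for all the singleton increments, forcing $r=0$, a contradiction. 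So I would instead look for the coordinates where $r$ is nonzero and locate them along chains. The plan is to let $P:=\{i\in N: r_i>0\}$ and $Q:=\{i\in N: r_i<0\}$; since $r(N)=0$ and $r\neq0$, both $P$ and $Q$ are nonempty. The goal reduces to proving $|P|=|Q|=1$.

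To force $|P|=|Q|=1$, I would argue by the dimension/extremality criterion: the number of linearly independent tight constraints at $r$ must be exactly $n-1$. I would try to show that if either $P$ or $Q$ had at least two elements, one could construct a second ray $r'$, not proportional to $r$, satisfying all the same tight inequalities as equalities — which would contradict extremality. The construction would split one of the positive (or negative) coordinates off: replace $r$ by a vector that redistributes weight among the elements of $P$ (or $Q$) while keeping $r'(S)=0$ for every tight $S$ and $r'(N)=0$. Here regularity enters decisively: it guarantees that through any tight set one can thread a maximal chain, and the chain structure constrains which sets in $\cF$ separate the elements of $P$ or $Q$. The main obstacle, which I expect to be the technical heart, is precisely verifying that such a redistributed $r'$ still satisfies every tight constraint — i.e.\ that regularity provides enough feasible sets to \emph{certify} extremality forces a single positive and single negative coordinate, while simultaneously providing \emph{too few} separating sets to prevent the redistribution when $|P|\geq2$ or $|Q|\geq2$. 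Making this balance precise, rather than hand-waving about ``enough chains,'' is where the real care is needed.

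An alternative route, which may be cleaner and which I would pursue in parallel, is to invoke Theorem~\ref{th:gene}: it states that $\cC(0)=\tC(0)$ (so the rays are all of the form $(1_j,-1_i)$) if and only if every extremal ray of $\cC(0)$ already has the form $(1_j,-1_i)$. Since $\tF$, the closure of a regular $\cF$ under union and intersection, is a distributive lattice of height $n$ (regularity ensures height $n$ is preserved), Theorem~\ref{th:1} describes $\tC(0)$'s rays explicitly as $(1_j,-1_i)$. Thus it would suffice to show directly that every extremal ray of $\cC(0)$ has exactly one positive and one negative entry, and then Theorem~\ref{th:gene} does the rest. In this framing the entire burden is the claim $|P|=|Q|=1$, and I would lean on the fact that a regular system's maximal chains let me recover, for any ray $r$ and any tight set, a full flag of feasible coalitions whose singleton increments must carry zero $r$-weight except across the ``jump'' from the negative block to the positive block — which geometrically can only sustain a one-dimensional face when a single coordinate goes up and a single one goes down.
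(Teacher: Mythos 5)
There is a genuine gap: neither of your two routes ever proves the central claim that an extremal ray has exactly one positive and one negative coordinate ($|P|=|Q|=1$), and both routes reduce to exactly that claim. You explicitly defer the verification (``where the real care is needed''), but that verification is not a technical detail --- it \emph{is} the theorem. The redistribution argument as sketched does not go through without essential input from regularity: if $P\supseteq\{p_1,p_2\}$, the perturbation $r'=r+\epsilon(1_{p_1}-1_{p_2})$ preserves the tight constraints only if no tight set contains exactly one of $p_1,p_2$, and nothing in your sketch rules such separating sets out. Example~\ref{ex:2} of the paper shows what is at stake: for the non-regular system $\cF=\{\emptyset,12,23,34,N\}$ the vector $(1,-1,1,-1)$ is an extremal ray of $\cC(0)$ with $|P|=|Q|=2$, so the configuration you must exclude genuinely occurs once regularity is dropped. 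Any correct proof therefore has to exploit the maximal-chain structure quantitatively, precisely at the point where your sketch stops.

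Your alternative route via Theorem~\ref{th:gene} is moreover circular as stated: the relevant direction of that theorem takes as \emph{hypothesis} that all extremal rays of $\cC(0)$ have the form $(1_j,-1_i)$, which is verbatim the statement of Theorem~\ref{th:regu}; and once that hypothesis is established there is no ``rest'' for Theorem~\ref{th:gene} to do --- it then yields $\cC(0)=\tC(0)$, which is the separate Corollary~\ref{cor:1}, not the present theorem. For contrast, the paper's own proof is constructive rather than rank-based: starting from the total order induced by one maximal chain, it observes that $(1_i,-1_\ell)$ is a ray if and only if $\ell$ is ranked after $i$ in \emph{every} maximal chain (equivalently, every feasible set containing $\ell$ contains $i$); it then shows that a candidate ray with several negative coordinates, $(1_i,-\alpha_1 1_{\ell_1},\ldots,-\alpha_q 1_{\ell_q})$, is a convex combination of the rays $(1_i,-1_{\ell_p})$ and hence not extremal, and finally discards composite pairs via the decomposition $(1_i,-1_{\ell_2})=(1_i,-1_{\ell_1})+(1_{\ell_1},-1_{\ell_2})$. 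If you wish to complete your tight-constraint approach, the missing lemma is that for a regular system an extremal ray's tight sets cannot separate some pair of elements of $P$ (nor of $Q$) when $|P|\geq 2$ or $|Q|\geq 2$ --- equivalently, that any such ray decomposes into non-proportional rays; proving this will require essentially the same chain-threading argument the paper uses, so the route is viable but buys no simplification.
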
 
\begin{proof}
Let $\eC$ be the set of all maximal chains from $\emptyset$ to $N$ in $\cF$, and
consider a particular chain, say $\emptyset,\{i\},\{i,j\},\{i,j,k\},\ldots,N$,
inducing the total order $i,j,k,\ldots,$ on $N$, and let us
construct an extremal ray $r$. 

Suppose $r_i>0$, hence w.l.o.g. we can set $r_i=1$. By the condition $r(N)=0$,
there must be at least one $\ell\in N\setminus i$ such that $r_\ell<0$. Select
$\ell$ such that $\ell$ is ranked after $i$ in every maximal chain in
$\eC$. Observe that $(1_i,-1_\ell)$ is a solution of the system $r(S)\geq 0$ for
all $S\in \cF$ and $r(N)=0$ (i.e., it is a ray of $\cC(0)$) if and only if $\ell$ has
the above property, because any $S\ni \ell$ contains also $i$. If no such $\ell$
exists, then set $r_i=0$, which gives a new system of inequalities where $r_i$
has disappeared, and consider the next element $j$ and do the same (note that if
exhausting all elements $i,j,k,\ldots$ without finding $\ell$, is equivalent to
the fact that there is no ray, a situation which happens for example if all
orders exist, i.e., $\cF=2^N$).  Suppose now that there
exist several $\ell$ ranked after $i$ in every maximal chain, say
$\ell_1,\ldots, \ell_q$. Then for every $\alpha_1,\ldots,\alpha_q\geq 0$ such
that $\sum_{p=1}^q \alpha_p=1$, the vector
$(1_i,-\alpha_11_{\ell_1},\ldots,-\alpha_q1_{\ell_q})$ is a ray. But each
$(1_i,-1_{\ell_p})$, $p=1,\ldots,q$ is also a ray, and
$(1_i,-\alpha_11_{\ell_1},\ldots,-\alpha_q1_{\ell_q})$ can be expressed as a
convex combination of these rays, proving that it is not extremal. Therefore
extremal rays are necessarily of the form $(1_i,-1_\ell)$.  In addition, if
$\ell_2$ is ranked after $\ell_1$ in every order, then
$(1_{\ell_1},-1_{\ell_2})$ is a ray, therefore $(1_i,-1_{\ell_2})$ is not
extremal since it can be obtained as
$(1_i,-1_{\ell_1})+(1_{\ell_1},-1_{\ell_2})$ (and similarly for the others).
\end{proof}
By Theorem \ref{th:gene}, we deduce immediately:
\begin{corollary}\label{cor:1}
If $\cF$ is a regular set system, then $\cC(0)=\tC(0)$.
\end{corollary}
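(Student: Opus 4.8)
The plan is to combine the two results that immediately precede the corollary. Theorem~\ref{th:gene} gives a criterion for $\cC(0)=\tC(0)$: the recession cones coincide if and only if every extremal ray of $\cC(0)$ has the form $(1_j,-1_i)$ for some $i,j\in N$. Theorem~\ref{th:regu} asserts that when $\cF$ is regular, this structural hypothesis on the rays is automatically satisfied. So the corollary follows by simply feeding the conclusion of Theorem~\ref{th:regu} into the ``if'' direction of Theorem~\ref{th:gene}.

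Concretely, I would first invoke the standing assumption that $\tF$ has height $n$, which is what licenses the use of Theorem~\ref{th:gene} (the theorem is stated under exactly this hypothesis, and $\cF$ being regular guarantees maximal chains of length $n$, which is consistent with $\tF$ having height $n$). Then I would apply Theorem~\ref{th:regu} to conclude that all extremal rays of $\cC(0)$ are of the form $(1_l,-1_m)$. This is precisely the condition appearing on the right-hand side of the biconditional in Theorem~\ref{th:gene}. Invoking that theorem, the left-hand side holds, namely $\cC(0)=\tC(0)$.

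There is essentially no obstacle here — the corollary is a one-line deduction, which is exactly why it is phrased as a corollary ``By Theorem~\ref{th:gene}, we deduce immediately.'' The only point worth a moment's care is verifying that the height hypothesis on $\tF$ required by Theorem~\ref{th:gene} is in force; since the whole subsection operates under the assumption (stated just before Theorem~\ref{th:gene}) that $\tF$ has height $n$, this is inherited without further argument. Thus the proof is:

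\begin{proof}
By Theorem~\ref{th:regu}, since $\cF$ is regular, every extremal ray of $\cC(0)$ has the form $(1_l,-1_m)$ for some $l,m\in N$. Under the standing assumption that $\tF$ has height $n$, Theorem~\ref{th:gene} applies, and its ``if'' part yields $\cC(0)=\tC(0)$.
\end{proof}
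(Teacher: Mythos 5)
Your proof is correct and matches the paper's own reasoning exactly: the paper derives the corollary with the single line ``By Theorem~\ref{th:gene}, we deduce immediately,'' which is precisely your combination of Theorem~\ref{th:regu} (extremal rays of $\cC(0)$ have the form $(1_l,-1_m)$) with the ``if'' direction of Theorem~\ref{th:gene}. Your extra check that the height-$n$ hypothesis on $\tF$ is in force (indeed automatic, since a regular $\cF$ contains a chain of length $n$ and $\cF\subseteq\tF\subseteq 2^N$) is a welcome bit of care the paper leaves implicit.
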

We can also deduce Theorem~\ref{th:1} from the above, and therefore derive
an alternative proof of it:
\begin{corollary}\label{cor:2}
If $\cF$ is regular and union and intersection closed, then the extremal rays
are $(1_j,-1_i)$ with $i\in N$ such that $|J_i|>1$ and $j\in J_i$, $j\prec i$.  
\end{corollary}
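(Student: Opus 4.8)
The plan is to derive Corollary~\ref{cor:2} by specializing the general characterization of rays for regular set systems (Theorem~\ref{th:regu}) to the subclass of set systems that are also closed under union and intersection, and then matching the resulting description against the structure of such lattices. Since $\cF$ is regular and union-and-intersection-closed, it is a distributive lattice of height $n$, hence $\cF=\cO(N,\leq)$ for its generating poset $(N,\leq)$, with each feasible set a downset and $J_i=\downarrow\! i$. By Theorem~\ref{th:regu}, every extremal ray has the form $(1_j,-1_i)$, so the only task is to translate the two geometric conditions appearing in the proof of Theorem~\ref{th:regu} --- namely ``$i$ precedes $j$ in every maximal chain'' and ``no intermediate element $k$ lies between them in every order'' --- into the order-theoretic conditions $j\in J_i$, $j\prec i$, and $|J_i|>1$.

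First I would record that, in a regular set system, $j$ is ranked after $i$ in \emph{every} maximal chain (equivalently, every permitted total order) precisely when every feasible $S$ containing $j$ also contains $i$; but in the distributive case feasible sets are exactly the downsets of $(N,\leq)$, and ``every downset containing $j$ also contains $i$'' is equivalent to $i\leq j$, i.e.\ $i\in\downarrow\! j=J_j$. This is the condition guaranteeing $(1_i,-1_j)$ is a ray. Reindexing to match the corollary's statement (the ray $(1_j,-1_i)$ requires $j\leq i$, i.e.\ $j\in J_i$), the ray $(1_j,-1_i)$ is a ray of $\cC(0)$ iff $j\in J_i$.

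Next I would impose extremality. The proof of Theorem~\ref{th:regu} shows $(1_j,-1_i)$ fails to be extremal exactly when there is an intermediate element $k$ with $j$ before $k$ and $k$ before $i$ in every order, so that $(1_j,-1_i)=(1_j,-1_k)+(1_k,-1_i)$ decomposes it into two rays. In the distributive setting, ``$j$ before $k$ in every order'' means $j<k$, so extremality of $(1_j,-1_i)$ is equivalent to the nonexistence of $k$ with $j<k<i$ --- that is, exactly the covering relation $j\prec i$. Finally, the requirement $|J_i|>1$ is simply the statement that $i$ covers some element (equivalently $i$ is not minimal), which is necessary for any $k\prec i$ to exist; I would note that if $|J_i|=1$ then $i$ is join-irreducible with no proper predecessor, so no ray $(1_j,-1_i)$ can arise. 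Collecting these, the extremal rays are exactly $(1_j,-1_i)$ with $|J_i|>1$, $j\in J_i$, $j\prec i$, matching the statement of Theorem~\ref{th:1}.

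The main obstacle I anticipate is the careful and correct translation between the two equivalent languages --- ``ranking in every maximal chain'' (the regular/permutation viewpoint used in Theorem~\ref{th:regu}) and ``order relations in the generating poset'' (the lattice viewpoint) --- and in particular verifying that in a \emph{distributive} lattice the set of permitted maximal chains forces ``$i$ before $j$ in every chain'' to coincide with $i\leq j$ in $(N,\leq)$. One direction ($i\leq j \Rightarrow i$ before $j$) is immediate from the downset property, but the converse requires exhibiting, for any pair $i,j$ with $i\not\leq j$, a linear extension of $(N,\leq)$ in which $i$ does \emph{not} precede $j$; this is a standard fact about linear extensions of posets but is the one place where one must be genuinely careful rather than merely formal. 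Once this equivalence is in hand, the remaining identifications of $\prec$ and of the condition $|J_i|>1$ are routine.
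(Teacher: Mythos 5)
Your proof is correct and takes essentially the same route as the paper's own proof: the paper likewise specializes Theorem~\ref{th:regu}, observing that under the hypotheses $\cF=\cO(N,\leq)$ and that its maximal chains are exactly the orders compatible with $(N,\leq)$, and then leaves the translation of ``ranked after $i$ in every chain'' and of the no-intermediate-element condition into $j\in J_i$, $j\prec i$, $|J_i|>1$ as ``easy to see''. Your write-up simply makes that translation explicit (via the downset characterization of feasible sets), and is in fact more careful than the paper on the one delicate point, namely that ``before in every maximal chain'' coincides with the partial order $\leq$.
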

\begin{proof}
Under the hypothesis, $\cF$ is generated by a poset $(N,\leq)$, and the set of
total orders generated by the maximal chains are those orders compatible with
the partial order $\leq$ on $N$. Then it is easy to see from the proof of
Theorem~\ref{th:regu} that we obtain the desired extremal rays.  
\end{proof}
The proof of Theorem \ref{th:regu} being constructive, we can propose the
following simple algorithm to produce all extremal rays of a regular set system.
\begin{quote}
{\sc Algo 2}
\begin{description}
\item[Step 0] Initialization. Select a maximal chain $C$ in $\eC$, and denote
  for simplicity by $1,2,\ldots,n$ the order induced by $C$. Put $L=\emptyset$. 
\item[For ] $i=1$ to $n-1$ {\bf do}:
  \begin{description}
  \item[For ] $j=i+1$ to $n$ {\bf do}:\\
    {If } $j$ is ranked after $i$ in every chain in $\eC$, {\bf then}
    \begin{itemize}
    \item Put $(1_i,-1_j)$ in $L$ \\ \emph{\% this is a candidate for being an
      extremal ray}
    \item {\bf For } $k<i$, check if $(1_k,-1_i)$ and $(1_k,-1_j)$ both
      exist in $L$. {\bf If } yes, remove $(1_k,-1_j)$ from $L$ \\ \emph{\% it
        can be obtained as the sum of $(1_k,-1_i)$ and $(1_i,-1_j)$}
    \end{itemize}
  \end{description}
\item[Final step: ] output list $L$ of extremal rays.
\end{description}
\end{quote}
\begin{example}
Let us apply  {\sc Algo 2} on the regular set system of Fig.~\ref{fig:5}
(left). The four orders induced by the maximal chains are:
\begin{gather*}
1-4-2-3-5\\
2-4-1-3-5\\
2-4-3-5-1\\
2-4-3-1-5
\end{gather*}
Let us take the first order for running the algorithm. Taking $i=1$, we see that
no $j$ can be found. Therefore, we take $i=4$, then $j=3$ and 5 are possible, so
we put in $L$ the rays $(0,0,-1,1,0)$ and $(0,0,0,1,-1)$. Let us take now $i=2$,
then $j=3$ and $5$ are possible, so we add in $L$ the two rays $(0,1,-1,0,0)$
and $(0,1,0,0-1)$. Next, we take $i=3$ and see that $j=5$ is possible, therefore
we put $(0,0,1,0,-1)$ in $L$. However, we have to remove $(0,0,0,1,0,-1)$ and
$(0,1,0,0,-1)$ from $L$. The extremal rays are therefore $(0,0,-1,1,0)$,
$(0,1,-1,0,0)$ and $(0,0,1,0,-1)$. This result is confirmed by the PORTA software.
\end{example}

We end this section by addressing the definition of the Weber set. Since $\cF$
is regular, marginal vectors can be defined as usual and therefore it makes
sense to speak of the Weber set. Suppose we have found a normal nested
collection of sets $\cN$, then the restricted Weber set $\cW_\cN(v)$ for $v$ defined
on $\cF$ can be defined as before. The question is then to compare $\cW_\cN(v)$
with $\cC_\cN(v)$ and also $\tW_{\cN'}(v)$, the restricted Weber set on $\tF$,
with $\cN'$ the Weber normal collection of $\tF$. Little can be said in general
if one does not have $\cN'=\cN$. Suppose then that this is the case. Because of
regularity, any restricted maximal chain in $\cF$ is a restricted maximal chain
in $\tF$, so that we have $\cW_\cN(v)\subseteq \tW_\cN(v)$. Recall also that
$\cC_\cN(v)\supseteq \tC_\cN(v)$, hence the question whether $\cC_\cN(v)\subseteq
\cW_\cN(v)$ remains. An examination of the proof of Theorem~\ref{th:3} reveals that
the technique of the proof cannot extend to this case. Indeed, the following
example shows that this is not true in general.
\begin{example}\label{ex:5}
Consider $N=\{1,2,3,4,5\}$, the following regular set system $\cF$  and its closure $\tF$.
\begin{figure}[htb]
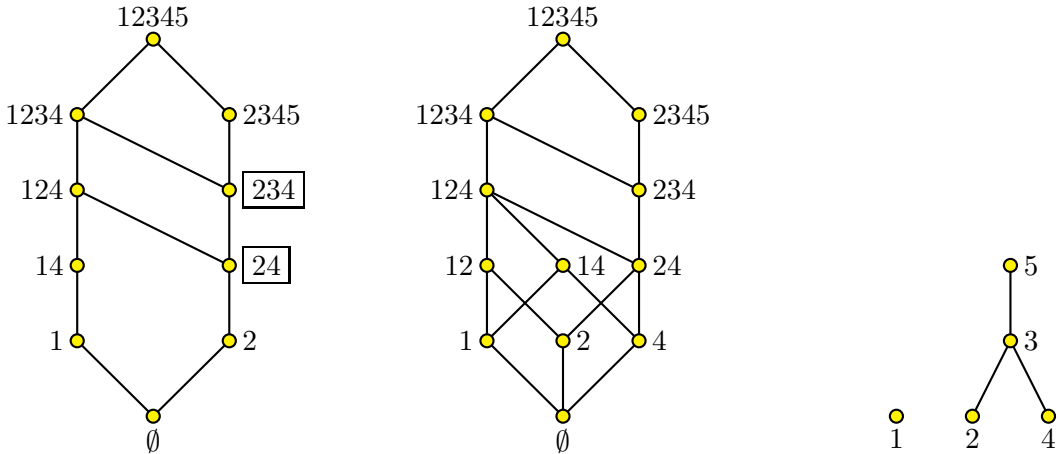

\begin{center}
\psset{unit=0.5cm}
\pspicture(0,-0.5)(4.5,10.5)
\pspolygon(2,0)(0,2)(0,8)(2,10)(4,8)(4,2)
\psline(4,4)(0,6)
\psline(4,6)(0,8)
\pscircle[fillstyle=solid,fillcolor=yellow](0,2){0.2}
\pscircle[fillstyle=solid,fillcolor=yellow](0,4){0.2}
\pscircle[fillstyle=solid,fillcolor=yellow](0,6){0.2}
\pscircle[fillstyle=solid,fillcolor=yellow](0,8){0.2}
\pscircle[fillstyle=solid,fillcolor=yellow](2,0){0.2}
\pscircle[fillstyle=solid,fillcolor=yellow](2,10){0.2}
\pscircle[fillstyle=solid,fillcolor=yellow](4,8){0.2}
\pscircle[fillstyle=solid,fillcolor=yellow](4,6){0.2}
\pscircle[fillstyle=solid,fillcolor=yellow](4,4){0.2}
\pscircle[fillstyle=solid,fillcolor=yellow](4,2){0.2}
\uput[-90](2,0){\small $\emptyset$}
\uput[180](0,2){\small $1$}
\uput[180](0,4){\small $14$}
\uput[180](0,6){\small $124$}
\uput[180](0,8){\small $1234$}
\uput[90](2,10){\small $12345$}
\uput[0](4,8){\small $2345$}
\uput[0](4,6){\small $\fbox{234}$}
\uput[0](4,4){\small $\fbox{24}$}
\uput[0](4,2){\small $2$}
\endpspicture
\hspace*{3cm}
\psset{unit=0.5cm}
\pspicture(0,-0.5)(4.5,10.5)
\pspolygon(2,0)(0,2)(0,8)(2,10)(4,8)(4,2)
\psline(2,0)(2,2)(0,4)
\psline(2,2)(4,4)(0,6)(2,4)(0,2)
\psline(2,4)(4,2)  
\psline(4,6)(0,8)
\pscircle[fillstyle=solid,fillcolor=yellow](0,2){0.2}
\pscircle[fillstyle=solid,fillcolor=yellow](0,4){0.2}
\pscircle[fillstyle=solid,fillcolor=yellow](0,6){0.2}
\pscircle[fillstyle=solid,fillcolor=yellow](0,8){0.2}
\pscircle[fillstyle=solid,fillcolor=yellow](2,0){0.2}
\pscircle[fillstyle=solid,fillcolor=yellow](2,2){0.2}
\pscircle[fillstyle=solid,fillcolor=yellow](2,4){0.2}
\pscircle[fillstyle=solid,fillcolor=yellow](2,10){0.2}
\pscircle[fillstyle=solid,fillcolor=yellow](4,8){0.2}
\pscircle[fillstyle=solid,fillcolor=yellow](4,6){0.2}
\pscircle[fillstyle=solid,fillcolor=yellow](4,4){0.2}
\pscircle[fillstyle=solid,fillcolor=yellow](4,2){0.2}
\uput[-90](2,0){\small $\emptyset$}
\uput[1800](2,2){\small $2$}
\uput[0](2,4){\small $14$}
\uput[180](0,2){\small $1$}
\uput[180](0,4){\small $12$}
\uput[180](0,6){\small $124$}
\uput[180](0,8){\small $1234$}
\uput[90](2,10){\small $12345$}
\uput[0](4,8){\small $2345$}
\uput[0](4,6){\small $234$}
\uput[0](4,4){\small $24$}
\uput[0](4,2){\small $4$}
\endpspicture
\hspace*{3cm}
\psset{unit=0.5cm}
\pspicture(0,-0.5)(4.5,4,5)
\psline(2,0)(3,2)(4,0)
\psline(3,2)(3,4)
\pscircle[fillstyle=solid,fillcolor=yellow](0,0){0.2}
\pscircle[fillstyle=solid,fillcolor=yellow](2,0){0.2}
\pscircle[fillstyle=solid,fillcolor=yellow](4,0){0.2}
\pscircle[fillstyle=solid,fillcolor=yellow](3,2){0.2}
\pscircle[fillstyle=solid,fillcolor=yellow](3,4){0.2}
\uput[-90](0,0){\small $1$}
\uput[-90](2,0){\small $2$}
\uput[-90](4,0){\small $4$}
\uput[0](3,2){\small $3$}
\uput[0](3,4){\small $5$}
\endpspicture
\caption{Set system $\cF$ (left), its closure under union and intersection
  $\tF$ (center), and the generating poset $(N,\leq)$ (right)}
\label{fig:5}
\end{center}
\end{figure}
{\sc Algo 1} applied on $\tF$ gives 24 and 234 as normal sets, which is also the Weber
collection. These sets belong to $\cF$, therefore the restricted Weber set can
be defined with the Weber collection. There are only two restricted maximal
chains on $\cF$, namely $\emptyset,2,24,234,2345,N$ and
$\emptyset,2,24,234,1234,N$, inducing the two vertices of $\cW_\cN(v)$:
\begin{align*}
w_1 & =(v(N)-v(2345), v(2), v(234)-v(24), v(24)-v(2), v(2345)-v(234))\\
w_2 & =(v(1234)-v(234), v(2), v(234)-v(24), v(24)-v(2), v(N)-v(1234)).
\end{align*} 
The restricted core is defined by the system:
\begin{align*}
x_1 & \geq v(1)\\
x_2 & \geq v(2)\\
x_1 + x_4 & \geq v(14)\\
x_2 + x_4 & = v(24)\\
x_1 + x_2 + x_4 & \geq v(124)\\
x_2 + x_3 + x_4 & = v(234)\\
x_1 + x_2 + x_3 + x_4 & \geq v(1234)\\
x_2 + x_3 + x_4 + x_5 & \geq v(2345)\\
x_1 + x_2 + x_3 + x_4 + x_5 & = v(N)
\end{align*}
Let us take the game defined by $v(N)=3$, $v(1234)=v(2345)=2$, $v(234)=1$,
$v(124)=2$, $v(24)=v(14)=1$, $v(2)=v(1)=0$. Then the two vertices of the Weber
set are $(1,0,0,1,1)$ and $(1,0,0,1,1)$, which makes the Weber set a
singleton. However, the vector $(1,1,0,0,1)$ is an element of the restricted
core, which forbids the core to be included into the Weber set.
\end{example}

\subsection{The case of weakly union-closed systems}
The situation here is less simple than with regular set systems. The following
theorem gives a sufficient condition for the equality of $\cC(0)$ and $\tC(0)$.
\begin{theorem}
Assume that $\cF$ is a weakly union-closed system, and denote by $\tF$ its
closure under union and intersection. Then the extremal rays of $\cC(0)$ and
$\tC(0)$ are the same if for any $S\in\tF\setminus \cF$, it is either a union of
disjoint sets of $\cF$, or there exist $S_1,S_2\in\cF$
such that $S=S_1\cap S_2$, and there
exists a covering in $\cF$ of $N\setminus (S_1\cup S_2)$.
\end{theorem}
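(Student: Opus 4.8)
The plan is to establish the cone equality $\cC(0)=\tC(0)$ directly, from which the equality of extremal rays is immediate (alternatively one can invoke Theorem~\ref{th:gene} at the very end to identify those rays as being of the form $(1_j,-1_i)$). Since $\cF\subseteq\tF$, the cone $\tC(0)$ is cut out by more inequalities than $\cC(0)$, so the inclusion $\tC(0)\subseteq\cC(0)$ is free, with no use of the hypothesis. All the content therefore sits in the reverse inclusion, so I would take an arbitrary $r\in\cC(0)$ and verify that $r(S)\geq 0$ for every $S\in\tF$ (recall that $r(N)=0$ already holds). As $r(S)\geq 0$ is given for all $S\in\cF$, only the sets $S\in\tF\setminus\cF$ need attention, and I would dispatch them using the two alternatives provided by the hypothesis.

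In the first case $S$ is a disjoint union $S=T_1\sqcup\cdots\sqcup T_k$ of members of $\cF$. Here I would simply use the additivity of $r$ over disjoint sets: $r(S)=\sum_{p=1}^{k}r(T_p)$, and each $r(T_p)\geq 0$ because $T_p\in\cF$, so $r(S)\geq 0$.

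The intersection case is where the real work lies. Here $S=S_1\cap S_2$ with $S_1,S_2\in\cF$, and $N\setminus(S_1\cup S_2)$ admits a feasible covering by (pairwise disjoint) sets $C_1,\ldots,C_m\in\cF$. I would first write the elementary inclusion--exclusion identity $r(S_1\cap S_2)=r(S_1)+r(S_2)-r(S_1\cup S_2)$, which holds for the additive set function $r$ regardless of feasibility, and then inject the efficiency constraint $r(N)=0$ in the form $r(S_1\cup S_2)=-r\big(N\setminus(S_1\cup S_2)\big)$. Substituting gives
\[
r(S_1\cap S_2)=r(S_1)+r(S_2)+r\big(N\setminus(S_1\cup S_2)\big).
\]
Now $r(S_1),r(S_2)\geq 0$ since $S_1,S_2\in\cF$, and $r\big(N\setminus(S_1\cup S_2)\big)=\sum_{p=1}^{m}r(C_p)\geq 0$ since the covering is by disjoint feasible sets; hence $r(S)\geq 0$. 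Together with the first case this yields $\cC(0)\subseteq\tC(0)$, so $\cC(0)=\tC(0)$, and the two cones share the same extremal rays.

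I expect the second case to be the main obstacle. Two points require care. First, the efficiency constraint $r(N)=0$ must be used at exactly the right place, namely to convert the unfavourable term $-r(S_1\cup S_2)$ into the nonnegative quantity $r\big(N\setminus(S_1\cup S_2)\big)$; without it the identity $r(S_1\cap S_2)=r(S_1)+r(S_2)-r(S_1\cup S_2)$ gives no sign information. Second, one genuinely needs the covering to be by \emph{pairwise disjoint} members of $\cF$, so that $r$ adds up over the $C_p$; an overlapping covering would not permit the sum to close up. It is worth noting that the whole argument is carried out for an arbitrary $r\in\cC(0)$, not merely for extremal rays, so it proves the full cone equality at once, and that weak union-closedness enters only as the standing assumption guaranteeing, through the stated condition, that every $S\in\tF\setminus\cF$ falls into one of the two cases.
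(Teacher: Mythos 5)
Your proof is correct and is essentially the paper's own argument: the same two-case analysis, and in the intersection case the same identity --- the paper exhibits $x(S)\geq 0$ as the positive combination $x(S_1)+x(S_2)+x(T_1)+\cdots+x(T_k)-x(N)$ (framed via the Farkas lemma), which is exactly your pointwise computation $r(S_1\cap S_2)=r(S_1)+r(S_2)+r\big(N\setminus(S_1\cup S_2)\big)$ rearranged. The only point you gloss over is the one the paper settles in the remark preceding its proof: the hypothesis supplies only a \emph{covering} of $N\setminus(S_1\cup S_2)$, and it is weak union-closedness that lets one merge overlapping members (their union stays in $\cF$) so that the covering may be taken to be a partition; thus the pairwise disjointness you rightly flag as essential is precisely where weak union-closedness enters, contrary to your closing remark that it serves only as a standing assumption.
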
 
By definition of weakly union-closed systems, note that the covering will be in
fact a partition.
\begin{proof}
We consider the set of inequalities of $\cC(0)$, i.e., $x(S)\geq 0$ for all
$S\in \cF$ and $x(N)=0$. We will prove that any additional inequality $x(F)\geq
0$ with $F\in \tF\setminus \cF$ is redundant. By the Farkas lemma, we know that
this amounts to prove that $x(F)\geq 0$ can be obtained by a positive linear
combination of the inequalities $x(S)\geq 0$, $S\in \cF$ and $x(N)=0$.

We consider $S\in \tF\setminus \cF$. Assume first that $S$ is a disjoint union
of sets in $\cF$, say $S=S_1\cup\cdots\cup S_k$. Then obviously $x(S)\geq 0$ is
implied by equalities $x(S_i)\geq 0$, $i=1,\ldots,q$, since it can be obtained as their sum.
Suppose on the contrary that $S$ is not a disjoint union of sets in $\cF$.  By
hypothesis, there exists $S_1,S_2\in\cF$ such that $S_1\cap S_2=S$
and there exists a partition $\{T_1,\ldots,T_k\}$ of $N\setminus (S_1\cup S_2)$.
Let us write the following system of inequalities:
\begin{align*}
x(S_1) & \geq 0 && (a_1)\\
x(S_2) & \geq 0 && (a_2)\\
x(T_1) & \geq 0 && (b_1)\\
\vdots & \vdots && \vdots\\
x(T_k) & \geq 0 && (b_{k})\\
-x(N)  &\geq 0 && (c),
\end{align*}
the last one coming from $x(N)=0$. Then the inequality $x(S)\geq 0$ is
obtained by $(a_1) + (a_2) + (b_1) + \cdots (b_k) + (c)$, which proves that
$x(S)\geq 0$ is redundant.   
\end{proof}
The next example illustrates the case where this condition is not satisfied.
\begin{example}
Take $N=\{1,2,3,4\}$ and the following weakly union-closed set system $\cF$ and its closure $\tF$.
\begin{figure}[htb]
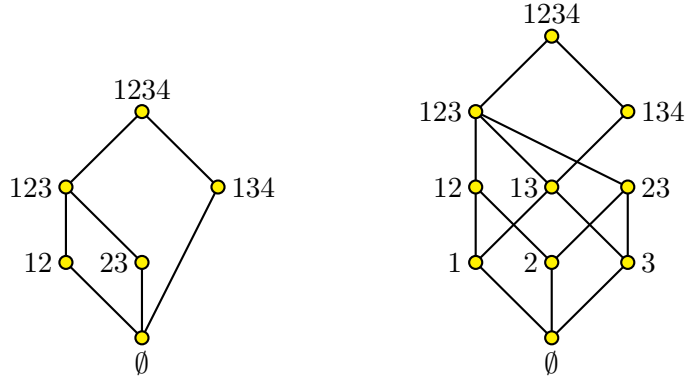

\begin{center}
\psset{unit=0.5cm}
\pspicture(0,-0.5)(4.5,6.5)
\pspolygon(2,0)(0,2)(0,4)(2,2)
\psline(2,0)(4,4)(2,6)
\psline(0,4)(2,6)
\pscircle[fillstyle=solid,fillcolor=yellow](2,0){0.2}
\pscircle[fillstyle=solid,fillcolor=yellow](0,2){0.2}
\pscircle[fillstyle=solid,fillcolor=yellow](2,2){0.2}
\pscircle[fillstyle=solid,fillcolor=yellow](4,4){0.2}
\pscircle[fillstyle=solid,fillcolor=yellow](0,4){0.2}
\pscircle[fillstyle=solid,fillcolor=yellow](2,6){0.2}
\uput[-90](2,0){\small $\emptyset$}
\uput[180](0,2){\small $12$}
\uput[180](2,2){\small $23$}
\uput[180](0,4){\small $123$}
\uput[0](4,4){\small $134$}
\uput[90](2,6){\small $1234$}
\endpspicture
\hspace*{3cm}
\psset{unit=0.5cm}
\pspicture(0,-0.5)(4.5,8.5)
\pspolygon(2,0)(0,2)(0,6)(2,8)(4,6)(2,4)(4,2)
\psline(2,0)(2,2)(0,4)
\psline(0,2)(2,4)(0,6)
\psline(2,2)(4,4)(0,6)
\psline(4,2)(4,4)
\pscircle[fillstyle=solid,fillcolor=yellow](2,0){0.2}
\pscircle[fillstyle=solid,fillcolor=yellow](0,2){0.2}
\pscircle[fillstyle=solid,fillcolor=yellow](2,2){0.2}
\pscircle[fillstyle=solid,fillcolor=yellow](4,2){0.2}
\pscircle[fillstyle=solid,fillcolor=yellow](0,4){0.2}
\pscircle[fillstyle=solid,fillcolor=yellow](2,4){0.2}
\pscircle[fillstyle=solid,fillcolor=yellow](4,4){0.2}
\pscircle[fillstyle=solid,fillcolor=yellow](0,6){0.2}
\pscircle[fillstyle=solid,fillcolor=yellow](4,6){0.2}
\pscircle[fillstyle=solid,fillcolor=yellow](2,8){0.2}
\uput[-90](2,0){\small $\emptyset$}
\uput[180](0,2){\small $1$}
\uput[180](2,2){\small $2$}
\uput[0](4,2){\small $3$}
\uput[180](0,4){\small $12$}
\uput[180](2,4){\small $13$}
\uput[0](4,4){\small $23$}
\uput[180](0,6){\small $123$}
\uput[0](4,6){\small $134$}
\uput[90](2,8){\small $1234$}
\endpspicture
\caption{Set system $\cF$ (left) and its closure under union and intersection
  $\tF$ (right)}
\label{fig:2}
\end{center}
\end{figure}
The required condition fails: take $S=2$, then it can obtained only by the
intersection of 12 and 23. But $N\setminus 123=4$ is not a subset of $\cF$. The extremal
rays of $\cC(0)$ are $(0,0,1,-1)$, $(1,0,0,-1)$ and $(1,-1,1,-1)$, but
$\tC(0)$ has only the two first rays as extremal rays.
\end{example}

\bibliographystyle{plain}
\bibliography{../BIB/fuzzy,../BIB/grabisch,../BIB/general}

\begin{thebibliography}{10}

\bibitem{albibolo01}
E.~Algaba, J.~M. Bilbao, P.~Borm, and J.~J. L\'opez.
\newblock The {M}yerson value for union stable structures.
\newblock {\em Math. Meth. Oper. Res.}, 54:359--371, 2001.

\bibitem{albibrji04}
E.~Algaba, J.~M. Bilbao, R.~{van den Brink}, and A.~Jim\'enez-Losada.
\newblock Cooperative games on antimatroids.
\newblock {\em Discrete Mathematics}, 282:1--15, 2004.

\bibitem{amcama98}
R.~Amer, F.~Carreras, and A.~Maga{\~n}a.
\newblock Extension of values to games with multiple alternatives.
\newblock {\em Annals of Operations Research}, 84:63--78, 1998.

\bibitem{bifejilo06}
J.~M. Bilbao, J.~R. Fern\'andez, N.~Jim\'enez, and J.~J. L\'opez.
\newblock The {S}hapley value for bi-cooperative games.
\newblock {\em Annals of Operations Research}, 158, 2008.

\bibitem{bileji98}
J.~M. Bilbao, E.~Lebr\'on, and N.~Jim\'enez.
\newblock Probabilistic values on convex geometries.
\newblock {\em Annals of Operations Research}, 84:79--95, 1998.

\bibitem{bir67}
G.~Birkhoff.
\newblock {\em Lattice Theory}.
\newblock American Mathematical Society, 3d edition, 1967.

\bibitem{bol00}
E.~M. Bolger.
\newblock A consistent value for games with $n$ players and $r$ alternatives.
\newblock {\em Int. J. of Game Theory}, 29:93--99, 2000.

\bibitem{dapr90}
B.~A. Davey and H.~A. Priestley.
\newblock {\em Introduction to Lattices and Orders}.
\newblock Cambridge University Press, 1990.

\bibitem{der92}
{J.J.M.} Derks.
\newblock A short proof of the inclusion of the core in the {W}eber set.
\newblock {\em Int. J. of Game Theory}, 21:149--150, 1992.

\bibitem{dere98}
{J.J.M.} Derks and H.~Reijnierse.
\newblock On the core of a collection of coalitions.
\newblock {\em Int. J. of Game Theory}, 27:451--459, 1998.

\bibitem{fagrhe09}
U.~Faigle, M.~Grabisch, and M.~Heyne.
\newblock Monge extensions of cooperation and communication structures.
\newblock {\em European Journal of Operational Research}, to appear.
\newblock 10.1016/j.ejor.2010.01.043.

\bibitem{fake92}
U.~Faigle and W.~Kern.
\newblock The {S}hapley value for cooperative games under precedence
  constraints.
\newblock {\em Int. J. of Game Theory}, 21:249--266, 1992.

\bibitem{fuj05b}
S.~Fujishige.
\newblock {\em Submodular functions and optimization}, volume~58 of {\em Annals
  of Discrete Mathematics}.
\newblock Elsevier, Amsterdam, 2nd edition, 2005.

\bibitem{futo83}
S.~Fujishige and N.~Tomizawa.
\newblock A note on submodular functions on distributive lattices.
\newblock {\em J. of the Operations Research Society of Japan}, 26:309--318,
  1983.

\bibitem{gra09a}
M.~Grabisch.
\newblock The core of games on ordered structures and graphs.
\newblock {\em 4OR}, 7:207--238, 2009.
\newblock DOI: 10.1007/s10288-009-0109-9.

\bibitem{grla05a}
M.~Grabisch and F.~Lange.
\newblock Games on lattices, multichoice games and the {S}hapley value: a new
  approach.
\newblock {\em Mathematical Methods of Operations Research}, 65:153--167, 2007.

\bibitem{grxi05}
M.~Grabisch and L.~J. Xie.
\newblock A new approach to the core and {W}eber set of multichoice games.
\newblock {\em Mathematical Methods of Operations Research}, 66:491--512, 2007.
\newblock doi 10.1007/s00186-007-0159-8.

\bibitem{grxi07}
M.~Grabisch and L.~J. Xie.
\newblock The core of games on distributive lattices: how to share benefits in
  a hierarchy.
\newblock {\em Technical Report, Centre d'Economie de la Sorbonne, 2008.77},
  2008.

\bibitem{hogr05}
A.~Honda and M.~Grabisch.
\newblock Entropy of capacities on lattices and set systems.
\newblock {\em Information Sciences}, 176:3472--3489, 2006.

\bibitem{hogr05a}
A.~Honda and M.~Grabisch.
\newblock An axiomatization of entropy of capacities on set systems.
\newblock {\em Eur. J. of Operational Research}, 190:526--538, 2008.

\bibitem{hsra93}
C.~R. Hsiao and {T. E. S.} Raghavan.
\newblock Shapley value for multichoice cooperative games, {I}.
\newblock {\em Games and Economic Behavior}, 5:240--256, 1993.

\bibitem{ich81}
T.~Ichiishi.
\newblock Super-modularity: applications to convex games and to the greedy
  algorithm for {LP}.
\newblock {\em J. Econom. Theory}, 25:283--286, 1981.

\bibitem{lagr06a}
Ch. Labreuche and M.~Grabisch.
\newblock A value for bi-cooperative games.
\newblock {\em Int. J. of Game Theory}, 37:409--438, 2008.
\newblock 10.1007/s00182-008-0126-5c.

\bibitem{lagr06b}
F.~Lange and M.~Grabisch.
\newblock Values on regular games under {K}irchhoff's laws.
\newblock {\em Mathematical Social Sciences}, 58:322--340, 2009.
\newblock DOI: 10.1016/j.mathsocsci.2009.07.003.

\bibitem{nura98}
M.~N\'u{\~n}ez and C.~Rafels.
\newblock On extreme points of the core and reduced games.
\newblock {\em Annals of Operations Research}, 84:121--133, 1998.

\bibitem{peza05}
H.~Peters and H.~Zank.
\newblock The egalitarian value for multichoice games.
\newblock {\em Annals of Operations Research}, 137:399--409, 2005.

\bibitem{sha71}
L.~S. Shapley.
\newblock Core of convex games.
\newblock {\em Int. J. Game Theory}, 1:11--26, 1971.

\bibitem{subohaquko05}
J.~Suijs, P.~Borm, H.~Hamers, M.~Quant, and M.~Koster.
\newblock Communication and cooperation in public network situations.
\newblock {\em Annals of Operations Research}, 137:117–140, 2005.

\bibitem{tom83}
N.~Tomizawa.
\newblock Theory of hyperspace ({XVI})---on the structure of hedrons.
\newblock Papers of the Technical Group on Circuits and Systems CAS82-172,
  Inst. of Electronics and Communications Engineers of Japan, 1983.
\newblock In Japanese.

\bibitem{brlava07}
R.~van~den Brink, G.~van~der Laan, and V.~Vasil'ev.
\newblock Component efficient solutions in line-graph games with applications.
\newblock {\em Economic Theory}, 33:349--364, 2007.

\end{thebibliography}

\end{document}